\newcommand{\br}[1]{\mathopen{}\left(#1\right)\mathclose{}}
\newcommand{\wt}[1]{ \widetilde{ #1 } }
\def\RR{\mathbb{R}}
\def\CC{\mathbb{C}}
\def\NN{\mathbb{N}}
\def\HH{\mathcal{H}}
\def\UU{\mathcal{U}}
\def\Fock{\mathcal{F}}
\def\SS{\mathcal{S}}
\def\DD{\mathcal{D}}
\def\Tau{\mathcal{T}}
\let\Om\Omega
\let\om\omega
\let\lam\lambda
\let\D\Delta
\def\bk{{\bm k}}
\def\bx{{\bm x}}
\theoremstyle{plain}
\newtheorem{theorem}{Theorem}[section] 
\newtheorem{lemma}{Lemma}[section]
\theoremstyle{remark}
\newtheorem*{remark}{Remark}
\theoremstyle{definition}
\newtheorem{condition}{Condition}[section]
\newtheorem{example}{Example}[section]
\begin{document}

\title{Weak coupling limit for quantum systems with unbounded weakly commuting system operators\footnote{\small{This article may be downloaded for personal use only. Any other use requires prior permission of the author and AIP Publishing. This article appeared in J. Math. Phys. 66, 042101 (2025) and may be found at \url{https://doi.org/10.1063/5.0239525}}}} 

\author{Ilya A. Lopatin}
\email{ilya.lopatin2606@gmail.com}
\affiliation{Department of Mathematical Methods for Quantum Technologies, Steklov Mathematical Institute of Russian Academy of Sciences}
\author{Alexander N. Pechen}
\email{apechen@gmail.com}
\affiliation{Department of Mathematical Methods for Quantum Technologies, Steklov Mathematical Institute of Russian Academy of Sciences}
\affiliation{University of Science and Technology MISIS}
\affiliation{Ivannikov Institute for System Programming of the Russian Academy of Sciences}

\begin{abstract}
This work is devoted to a rigorous analysis of the weak coupling limit (WCL) for the reduced dynamics of an open infinite-dimensional quantum system interacting with electromagnetic field  or a reservoir formed by Fermi or Bose particles in the dipole approximation. The free system Hamiltonian and the system part of the Hamiltonian describing interaction with the reservoir are considered as unbounded operators with continuous spectrum which are commuting in a weak sense. We derive in the weak coupling limit the reservoir statistics, which is determined by whose terms in the multi-point correlation functions of the reservoir which are non-zero in the WCL. Then we prove that the resulting reduced system dynamics converges to unitary dynamics (such behavior sometimes called as Quantum Cheshire Cat effect) with a modified Hamiltonian which can be interpreted as a Lamb shift to the original Hamiltonian. We obtain exact form of the modified Hamiltonian and estimate the rate of convergence to the limiting dynamics. For Fermi reservoir, we prove the convergence of the full Dyson series. For Bose case the convergence is understood term by term.
\end{abstract}

\pacs{}

\maketitle 

\section{Introduction}\label{Sec1:Intro}
In the theory of open quantum systems the weak coupling limit (WCL) is the most studied limit which allows to derive a Gorini-Kossakowski-Sudarshan-Lindblad (GKSL) master equation\cite{breuer2002theory} starting from an exact fully quantum Hamiltonian describing interaction between the quantum system and its surrounding quantum reservoir. A rigorous derivation of the master equation in the WCL goes back to the fundamental works of E.B. Davies\cite{Davies::markovian-MasEq, Davies_1976}, where the case of an $N$-level atom  coupled to an infinite reservoir of Fermi particles was considered and a GKSL master equation in the WCL was derived. The WCL relies on the Bogoliubov-van Hove limit\cite{Bogolubov::ProbDynamThStatPh, VanHove::EnCorrPersistPerturb} for a model of a system interacting with the bath. Denoting by $\lambda$ the interaction coupling strength, one considers time rescaling $t\to t/\lambda^2$ and then studies the reduced system dynamics in the limit as $\lambda\to 0$. It is equivalent to consider coupling strength as going to zero, $\lambda\to 0$, and time on which one studies the dynamics as going to infinity, $t\to\infty$, however not independently but in a related way so that the product $\lambda^2t$ remains finite and determines a new time scale for the dynamics. It turns out that when the coupling strength goes to zero, on a unit time its contribution to the reduced system dynamics is negligible. However, on the time scale $\sim 1/\lambda^2$ the contribution of the weak interaction becomes significant and determines a non-trivial reduced system dynamics.

Various works on the WCL were performed after that. A similarity between the singular coupling and the weak coupling limits was established\cite{Palmer_1977}. The weak coupling limit was used to describe irreversible thermodynamics and derive kinetic equations for open quantum systems\cite{Spohn_Lebowitz_1978,Spohn_1980}. It was shown that among several different Markov approximations for a quantum system
weakly coupled to a thermal reservoir in general only the rigorous approximation given by E.B. Davies preserves positivity\cite{Dumcke_Spohn_1979}.  Master equations for two-level system interacting with a boson reservoir in the dipole approximation in the WCL were studied~\cite{Jaksik::MasterEquations}. Conditions for complete positivity for two non-interacting systems embedded in a heat bath in the WCL were obtained\cite{Benatti_Floreanini_2001}. Mathematical theory with the WCL for the Wigner-Weisskopf atom was developed~\cite{Jaksic2006}. A generalization of the WCL to include non-secular terms was performed\cite{Trushechkin_2021}. 

The stochastic limit approach, which allows to describe not only the reduced system dynamics, but also the compound dynamics of the system and the reservoir, was developed\cite{ALVBook,Accardi_Frigerio_Lu_1990,Accardi_Frigerio_Lu_1991} and applied to various quantum systems\cite{Accardi_Kozyrev_Volovich_1997,Accardi_Kozyrev_Volovich_1999,Accardi_Kozyrev_Pechen_2006}. In particular, the polaron model and non-relativistic electromagnetics with unbounded system Hamiltonians were considered\cite{Accardi_Kozyrev_Volovich_1999}. 
The weak coupling and the stochastic limit were both described\cite{Derezinski_DeRoeck_2007}. In the stochastic limit approach one proves that the reservoir in the WCL is described by quantum white noise operators with commutation relations $[b(t),b^{\dagger} (\tau)] \sim \delta( t-\tau)$. Using this property, one can describe not only the reduced system dynamics in the WCL, but also the joint dynamics of the system and the reservoir. An approach to study higher order corrections to the WCL and to the stochastic limit was developed\cite{Pechen_Volovich_2002,Pechen_2004}. In this approach, higher order terms in $\lambda$ to the limiting dynamics are described by quantum multipole noise, i.e. operator valued distributions $b_n{^\pm}$ with commutation relations $\comm{b_n^{-} (t) }{b_m^{+} (\tau)}\approx \delta_{nm}i^n\delta^{(n)}(t-\tau)$, where $\delta_{nm}$ is Kronecker delta and $\delta^{(n)}(t-\tau)$ is $n$-th derivative of Dirac delta-function. Non-petrubative effects in the investigation of corrections to the WCL were discovered\cite{Teretenkov_2021}.

Many works on the WCL consider the case of an $N$-level quantum system coupled to an infinite heat bath. Quantum systems with an infinite-dimensional Hilbert space and with unbounded system operators in the free and interaction parts of the joint Hamiltonian were considered for example in the stochastic limit approach\cite{Accardi_Kozyrev_Volovich_1999, ALVBook}. An important rigorous analysis was performed in  Ref.~\onlinecite{Jaksik::FermiGoldenRule}, where a general Hamiltonian with interaction between the system and the field was studied and for an unbounded system free Hamiltonians and bounded system part of the interaction Hamiltonian, self-adjointness of a closure of the total Hamiltonian was proved. After that spectral deformation and Fermi's Golden Rule were derived.

In general, a rigorous analysis and derivation of the reduced system dynamics for the WCL when both system free Hamiltonian and system part of the interaction Hamiltonian are unbounded operators is less studied. However, such systems include important for applications examples. In this work, we consider the case of an infinite quantum system coupled to the reservoir of free Fermi or Bose particles with unbounded weakly commuting system operators with continuous spectrum in the free and interaction parts of the total system+reservoir Hamiltonian. A particular physical motivation comes from the model describing interaction of an atom with electromagnetic field in the dipole approximation~\cite{Arai::HamiltonianQuantumElectrodynamics}, where one considers interaction of a non-relativistic particle (atom) with an external electromagnetic field. The Hilbert space of the electromagnetic field in the Coulomb gauge is $\HH^{ \text{em} } = \Fock \otimes \Fock$, where $\Fock$ is the Fock space over $L^2\br{\RR^3}$. The Hilbert space of the total system is $\HH=L^2\br{\RR^3}\otimes \HH^\text{em}$. The free atom Hamiltonian is $H^{S}=-\Delta/2M$, where $M$ is mass of the atom. Denote by $\bm{e}^{(p)}$ for $p=1, 2$ two orthogonal polarization vectors of the field and consider the vector operator $ \mathbf{A}  = \sum_{p= 1, 2} a^{\dagger}_p  ( \mathbf{e}^{ (p) } f ) + a_p (   \mathbf{e}^{(p)} \overline{ f} ) $, where $f \in L_2 ( \RR^3 ) $ is some function and bar denotes its complex conjugate, with a particular form $ f = 1/\sqrt{(2(2\pi)^3} e^{-i \bk \bx } v / \sqrt{ \abs{\bk} } $, where  $v$ describes ultraviolet cut-off (an arbitrary rotationally invariant real function on $\RR^3$ satisfying $ \| v/\sqrt{ \bm{k} }  \| <\infty, \norm{ v/\bm{k} } < \infty$. Then the dipole approximation is described by the interaction $V = ( -i \nabla) \cdot \bm{A}$. In Ref.~\onlinecite{Arai::HamiltonianQuantumElectrodynamics}, self-adjointness of the total Hamiltonian was proved. 

Let us highlight the following features of this model. The system part of the interaction, which is $( -i \nabla) $, commutes with the free system Hamiltonian, which is $-\Delta/2M$. And both these operators are unbounded and having continuous spectra (resp, $\mathbb R^3$ and $\mathbb R_+$). In this paper, taking into account these features, we consider a generalization of this model. Namely, we consider a quantum system interacting with a Fermi or Bose reservoir such that the free system Hamiltonian and the system operators describing interaction of the system with the reservoir are unbounded and commuting in the weak sense. For this general model, we rigorously consider the weak coupling limit and study Dyson series for the reduced system dynamics. Each term of the Dyson series is determined by certain correlation functions of the reservoir. First, we derive the limit of the reservoir correlation functions appearing in the Dyson series of the reduced system dynamics in the weak coupling limit. When we prove absolute and uniform convergence of the Dyson series, summarize all the terms which are non-zero in the limit, and prove that the resulting reduced dynamics is unitary. We obtain the exact form of the limiting generator and show that is a Hermitian operator (a modified system free Hamiltonian). Thus the reduced system dynamics in this case has no dissipative part. Such behavior with purely oscillating regime, described by Leggett\cite{Leggett_1987} et al. for a dissipative two-state system, is known in the theory of open quantum systems and sometimes called as Quantum Cheshire Cat effect\cite{ALVBook, Accardi_Kozyrev_Volovich_1997}. We also estimate the convergence rate. The modified Hamiltonian has the form of the initial system free Hamiltonian plus a weakly commuting with it additional term (a.k.a. Lamb shift). Numerical simulations for a Gaussian wave-packet are performed to show differences in the behavior of the wave-packet under the action of the initial system free Hamiltonian and of the limiting modified Hamiltonian: they show that the modified Hamiltonian leads to a non-symmetric spreading of the wave-packet.

The structure of the paper is the following. In Sec.~\ref{Sec2:Model} a precise definition of the considered model is given. In Sec.~\ref{sec:DS-definition} we provide the Dyson series for model and introduce some necessary for the analysis notations. Sec.~\ref{Sec3:Statistics} contains derivation of the non-zero in the limit correlation functions of the reservoir and of the limiting reservoir statistics. Some details of the proof are given in Appendix \ref{app:ProofLemma}. In Sec.~\ref{Sec4:Reduced} the reduced limiting dynamics of the system is derived, which is shown to be a unitary dynamics with some modified system Hamiltonian. Sec.~\ref{Sec6:Simulations} contains numerical simulations of the free and the limiting dynamics of a Gaussian wave-packet for some particular cases. In Discussion Sec.~\ref{Sec5:Discussion} the possibility to consider Bose reservoir and states other than thermal equilibrium, improving some estimates, and computation in the zero temperature limit with appearance of the Lamb shift, are outlined.

\section{The Model and Some General Definitions} \label{Sec2:Model}
In this section, we describe the mathematical model of a quantum system interacting with Fermi or Bose reservoir or an electromagnetic field. Let $\HH^e = L_2 ( \RR^3 ) $ be Hilbert space associated with one particle of the reservoir with scalar product $( \cdot , \cdot)$ (linear with respect to the first argument) and let $\Fock_{\pm} = \Fock_{\pm} ( \HH^e )$ be symmetric (Bose, $+$) and anti-symmetric (Fermi, $-$) Fock space. The Hilbert space for electromagnetic field is 2-fold tensor product\cite{Arai::HamiltonianQuantumElectrodynamics, CANNON:ElectromagneticField} $\HH^R = \Fock_{+} \otimes \Fock_{+}$. By $a^{\dagger} ( \bm k )$ and $a ( \bm k)$ denote operators of creation and annihilation of particle with momentum $\bm k$ respectively. For $f \in \HH^e$, define $a^{\dagger} ( f ) = \int_{\RR^3} f ( k ) a^{\dagger} (k) dk $ and $a (f) = \int_{\RR^3} \overline{ f (k) } a(k)dk $ . Polarization of massless particles (e.g. photons) can be taken into account by using $ a^{\#}_{1} (f) = a^{\#} (f) \otimes \mathbb{I}$ and $a^{\#}_{2} (f) = \mathbb{I} \otimes  a^{\#} (f)$. Case without polarisation corresponds to $\HH^R = \Fock_{\pm}$.  As the one particle Hamiltonian $h$, we consider two cases. The first case is the Hamiltonian of a massless particle $h_1$, which has the form of multiplication by $h_1 = \abs{ \bm k}$ in the momentum representation. The second case is the Hamiltonian of a free particle with mass, which has the form of multiplication by $h_2 = \abs{ \bm k}^2$ in the momentum representation.  The operators $h_1$ and $ h_2$ are unbounded and self-adjoint in their maximal domains of definition. Let $H^R = \dd \Gamma (h)$ be second quantization of $h = h_1, h_2$.

Let $\HH^{S}$ be Hilbert space (complex separable and generally infinite dimensional) associated with a quantum system. By $H^S: \HH^{S} \supseteq D (H^S) \to \HH^S $ denote free Hamiltonian of the system, which in general is an unbounded operator.

By $\SS  ( \RR^3 ) $ denote the Schwartz space of functions over $\RR^3$ and let  $C_0 (\RR^3)$ be the space of complex-valued functions in $\mathbb R^3$ with compact support. Let $\mathfrak{G}$ be (dense in $\HH^e$) space of functions which belong to $C_0 (\RR^3)$ in the momentum representation and belong to $\SS (\RR^3)$ in the coordinate representation. 

Denote by $S \br{k}$ the set of all permutation of $k$ elements. By $\D_k (\tau)$ and $S_k ( \tau)$ denote the following sets in $\RR^k$:
\begin{eqnarray*}
    \Delta_k \br{\tau} &=& \left\{ t_1, t_2, \dots, t_k: \tau > t_1 > t_2 > \dots > t_k > 0  \right\}~, \\
    S_k \br{\tau} &=& \left\{ t_1, \dots, t_k > 0: t_1 + t_2 + \dots + t_k < \tau  \right\}~.
\end{eqnarray*} 

Consider interaction between the system and the reservoir as having the following form:
\begin{equation}\label{eq:Interaction}
V = \sum_{j=1}^{\nu} Q_j \otimes F_j ~,
\end{equation}
where $Q_j: \HH^S \supseteq D_0  \to \HH^S$ are some operators in the system Hilbert space which are in general unbounded and  $D_0$ dense in $\HH^S$, $\nu < \infty$. For the reservoir part we consider 
\[
F_j = \sum_{p=1}^2 a^{\dagger}_p ( f_{jp} ) + a_p ( g_{jp} )~.
\]
 
We assume that the interaction operator \eqref{eq:Interaction} satisfies the following conditions.
 
\begin{condition} \label{cond:PermutForHermite}
There exists $\sigma \in S \br{\nu}$, $\sigma^2 = \text{id}$ such that 
\begin{eqnarray*}
    Q_j^{\dagger} &=& Q_{\sigma (j) }~, \\
    F_j^{\dagger} &=& F_{\sigma (j) }~.
\end{eqnarray*} 
\end{condition}

\begin{condition} \label{cond:Formfactors}
 The form-factors $f_{jp}, g_{jp}$  belong to $\mathfrak G$ for all $j = 1, 2, \dots, \nu$, $p=1, 2$.
\end{condition}

\begin{condition} \label{cond:Interaction} 
There exists a domain $\DD \subseteq D_0 \cap D (H^S)$ dense in  $\HH^S$ such that: \\
1) $e^{ \pm i t H^S}, Q_j: \DD \to \DD $, $j=1, \dots, \nu$  for all $t \in \RR$ ; \\
2) for any $f \in \DD$ there exists a constant $c_f$ such that $\| Q_{j_1} Q_{j_2} \dots Q_{j_k} f \| \leq c_f^k \| f \|$ for $k = 1, 2, \dots$; \\
3) for any $t \in \RR$ and for any $f \in \DD$ holds $ e^{i H^S t} Q_j e^{-i H^S t} f =  Q_j f$.
\end{condition}

\begin{remark}
Condition \ref{cond:PermutForHermite}  is used to make the interaction operator \eqref{eq:Interaction} be a Hermitian (or symmetric) operator. Condition \ref{cond:Formfactors} is a technical condition necessary for performing some calculations. Condition \ref{cond:Interaction}  is a weak form of commutativity for unbounded operators that corresponds to the Generalized Rotating Wave Approximation (GRWA) with zero Bohr frequency  considered for discrete spectrum Hamiltonians in Sec.~4.10 of Ref.~\cite{ALVBook}. As it is  known, the condition of weak commutativity of two self-adjoint operators does not imply  commutativity of the unitary groups generated by this operators (as e.g., was shown in the famous Nelson's example\cite{Nelson::ExampleUnboundedOpCommutative, SimonReed::MethodModernMathPhI}). Any system with bounded operators $H^S$ and $Q_j$ such that $\comm{H^S}{Q_j} =0 $  for all $j$ where $\comm{\cdot}{\cdot}$ denotes commutator, also satisfies Condition~\ref{cond:Interaction}. 
\end{remark}

\begin{example} \label{example:CommutativeCase}
An example which satisfies the Condition~\ref{cond:Interaction} with an unbounded Hamiltonian is with the system free Hamiltonian $H^S = -\Delta/ \br{2M}$ and $Q_j = -i \partial_j$, $j = 1, 2, 3$. These operators describe interaction of an atom with electromagnetic field in the dipole approximation \cite{Arai::HamiltonianQuantumElectrodynamics}. For this case we can set $\DD := \mathfrak{G}$ and for any $f\in\DD$ considered in the momentum representation constant $c_f$ from Condition \ref{cond:Interaction} can be chosen as radius $R$ of the ball $B_{R}(0)$ in the system space centered at the origin such that $B_{R} (0) \supset  \mathop{\rm supp} f $.
\end{example}

For the algebra of system observables consider
\[
\UU^S = \left\{ \sum_{j=1}^{k} \br{\cdot, \phi_j} \psi_j, k < \infty, \phi_j, \psi_j,  \in \DD \right\}.
\]

\begin{remark}
Since the domain $\DD$ is dense in $\HH^S$, the system observable algebra $\UU^S$ is dense in the algebra of continuous operators in strong operator topology. 
\end{remark}

By $\UU^R$ we denote CAR  (canonical anticommutation relations) algebra of the reservoir for a Fermi reservoir or CCR (canonical commutation relations) algebra for a Bose reservoir. States of the reservoir are defined as linear functionals $\om: \UU^R \to \CC$ which are continuous, positive (i.e., $\om ( A^{\dagger } A ) \geq 0$) and normalized (i.e., $\om ( \mathbb I ) = 1$).  We consider states $\om$ of the reservoir which satisfy the following conditions.

\begin{condition} \label{cond:InitState} We consider states of the reservoir $\om$ which are:
\begin{itemize}
\item \textit{Quasi free}. That is, for any even number $2k$ one has
\begin{equation} \label{eq:quasi-free}
\om \big( a^{\#}_{p_1} (f_1) a^{\#}_{p_2} (f_2) \dots a^{\#}_{p_{2k}} (f_{2k}) \big)  =  \sum_{\sigma} \mathop{sgn} (\sigma) \prod_{j=1}^k \om \big( a^{\#}_{p_{\sigma (2j-1) }} ( f_{\sigma (2j-1)} ) a^{\#}_{p_{\sigma (2j)}} ( f_{\sigma (2j)} )  \big) ~,
\end{equation}
where the sum is taken over all partitioning of $\{1, 2, \dots, 2k\}$ into pairs $\br{n_j, m_j}$ with $n_j < m_j$. Permutation $\sigma$ is defined as $\sigma \br{2j-1} = n_j$, $\sigma \br{2j} = m_j$. And for any odd number $2k-1$,
\[
\om \big(  a^{\#}_{p_1} (f_1) a^{\#}_{p_2} (f_2) \dots a^{\#}_{p_{2k-1}} ( f_{2k-1} ) \big) = 0 ~.
\]
\item \textit{Gauge invariant}. That is,
\begin{eqnarray*}
\om \big( a_{p_1} (f) a_{p_2} (g) \big) &=& \om \big( a^{\dagger}_{p_1} (f) a^{\dagger}_{p_2} (g)  \big)  = 0 ~,\\
\om \big( a^{\dagger}_{p_1} (f) a_{p_2} (g) \big) &=& \delta_{p_1 p_2} \br{f, \mathbf{B} g} ~,
\end{eqnarray*}
where $\mathbf{B}$ is some operator in $\HH^e$ such that $0 \leq \mathbf{B} \leq \mathbb I$. We consider  $\mathbf{B}$ as a function of the operator $h$, $\mathbf{B} =  \rho \br{h}$, where $\rho \in \SS \br{\RR}$. In the momentum representation we have
\[
    \om  \big( a^{\dagger}_{p_1} (f) a_{p_2} (g) \big)  = \delta_{p_1 p_2} \int\limits_{\RR^3} \rho (k) f (k) \overline{g (k)}  dk~.
\]
\end{itemize}
\end{condition}

\begin{remark}
Bose creation and annihilation operators are unbounded operators but expressions like $\om \big( a^{\#}_{p_1} (f_1) \dots a^{\#}_{p_n} (f_n) \big)$ are defined for them. So we assume Condition \ref{cond:InitState} holds  also for the Bose case.
\end{remark}

For thermal equilibrium state (Kubo-Martin-Schwinger or KMS state\cite{BratelliRobinson::OperatorAlgabrasQuantumStatMech}) with inverse temperature $\beta$ and chemical potential $\mu$, for Fermi particles
\[
\rho \br{ h } = \frac{ e^{\beta \mu} e^{-\beta h } }{ e^{\beta \mu} e^{-\beta h  } + 1}
\]
and for Bose particles 
\[
\rho \br{ h } = \frac{ e^{\beta \mu} e^{-\beta h } }{ 1 - e^{\beta \mu} e^{-\beta h } }
\]
and Condition~\ref{cond:InitState} holds. Maxwell distribution, which is high-temperature limit of the Bose and Fermi distributions, also fits the considered class (with $\rho \br{h} \sim e^{-\beta h} $). However, with this approach we can consider states other than thermal equilibrium. 

The requirement on the function $\rho$ in Condition  \ref{cond:InitState} is motivated by the application of the stationary phase method \cite{SimonReed::MethodModernMathPhIII}.

The free Hamiltonian without interaction is
\[
H_0 = H^S \otimes \mathbb{I} + \mathbb{I} \otimes  H^R ~.
\]
Let $\Tau^R \br{t}$ and $\Tau^S \br{t}$ be automorphisms  created by the free reservoir and the free system evolutions, respectively, i.e.
\begin{eqnarray*}
\Tau^S (t) X  &=& e^{i H^S t} X e^{-i H^S t} ~, \\
\Tau^R (t) Y  &=& e^{i H^R t} Y e^{-i H^R t}~, \\
\Tau^R (t) ( a^{\#} (f_1) \dots a^{\#} (f_n) )  &=& a^{\#} ( e^{i h t} f_1 ) \dots a^{\#} ( e^{i h t} f_n) ~.
\end{eqnarray*}
Note that for $h = h_1, h_2$ and state $\om$ which satisfy Condition \ref{cond:InitState} we have
\begin{equation} \label{eq:InvariantFreeEvState}
    \om ( \Tau^R (t) ( a^{\# }  (f)  a^{\# } (g) ) ) = \om ( a^{\# }  (f)  a^{\# } (g) ), \quad t \in \RR ~.
\end{equation}
The perturbed evolution of an observable $Y \in \UU^S \otimes \UU^R$ is defined by the Heisenberg equation
\begin{equation} \label{eq:vonNeumann}
\frac{dY}{dt} = i \comm{ H_0 + \lam V}{ Y } ~,
\end{equation}
where  $\lam \in \br{0, 1}$. Equation~\eqref{eq:vonNeumann} defines an automorphism $Y \mapsto \Tau_{\lam} \br{t} Y$.

Partial trace over the reservoir is the linear map $\mathring{\om}: \UU^S \otimes \UU^R \to \UU^S $ defined by the relation $ \mathring{\om} \br{A \otimes B} = A  \times   \om \br{B} $. 

The weak coupling limit is the limit when coupling constant goes to zero, $\lambda\to 0$ and time goes to infinity, $t\to\infty$, but in a related way so that $\lambda^2 t$ remains finite. Then $\tau=\lambda^2 t$ defines a new time scale. 

The reduced dynamics of the system in the weak coupling limit is defined as
\begin{equation} \label{eq:ReducedDynamic}
\wt{\Tau}_{\lam} \br{t} X =  \mathring{\om} \br{ \Tau_{\lam} \br{\lam^{-2}t } \br{X \otimes \mathbb{I}}   },\quad \lam \downarrow 0.
\end{equation}

\section{Dyson Series for the Reduced System Dynamics} \label{sec:DS-definition}
For representing the solution of equation \eqref{eq:vonNeumann} we use standard Dyson series method. Now we consider only Fermi case. Assume $\lam \in \br{0, 1} $ is fixed and introduce the following operator
\[
V \br{t} = \lam e^{-iH_0 t} V e^{i H_0 t} ~,
\]
$V(t)$ is correctly defined on $\DD \otimes \HH^R$, where $\DD$ is the domain appearing in Condition \ref{cond:Interaction}. Consider the following Cauchy problem
\begin{equation} \label{eq:CPvonNeumann}
\left\{ \begin{aligned}
& \frac{d Z \br{t}}{dt} = i \comm{ V \br{t}  }{Z \br{t} } ~, \\
& Z \br{0} = X \otimes \mathbb{I} ~,
\end{aligned} \right.
\end{equation}
where $X \in \UU^S$. Solution of \eqref{eq:CPvonNeumann} can be represented as the series
\begin{equation} \label{eq:DysonSeriesGeneral}
Z \br{t} =  X \otimes \mathbb{I} + \sum_{k=1}^{\infty} i^k \int\limits_{\D_k \br{t} } dt_1 \dots d t_k \comm{ V \br{t_1} }{ \comm{ V \br{t_2}}{ \dots \comm{V \br{t_k} }{ X \otimes \mathbb{I}} \dots} } ~.
\end{equation} 
\begin{lemma} \label{le:DS-converge}
Let Conditions \ref{cond:PermutForHermite} and \ref{cond:Interaction} hold. Then for all $X \in \UU^S$  \\
 1) operator series in the right hand side (r.h.s.) of Eq.~\eqref{eq:DysonSeriesGeneral} is correctly defined and absolutely and uniformly converges on any compact for any argument $y \in \DD \otimes \HH^R$; \\
 2) Operator $Z \br{t}$ defined on the domain $\DD \otimes \HH^R  $ has a bounded closure.
\end{lemma}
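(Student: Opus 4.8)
The plan is to reduce the whole statement to operator-norm estimates on the algebraic tensor product $\DD \otimes \HH^R$ and then dominate the series by an exponential. First I would simplify $V(t)$ on $\DD \otimes \HH^R$. Since $H_0 = H^S \otimes \mathbb{I} + \mathbb{I} \otimes H^R$, the free evolution factorizes, and part 3) of Condition \ref{cond:Interaction} gives $e^{-iH^S t} Q_j e^{iH^S t} = Q_j$ on $\DD$ (replace $t$ by $-t$ in the stated identity), while part 1) guarantees $e^{\pm i H^S t}$ and $Q_j$ preserve $\DD$. Hence on $\DD \otimes \HH^R$ one has
$$V(t) = \lambda \sum_{j=1}^\nu Q_j \otimes R_j(t), \qquad R_j(t) = \sum_{p=1}^2 a^\dagger_p(e^{-iht} f_{jp}) + a_p(e^{-iht} g_{jp}),$$
so the time dependence sits entirely in the reservoir factor and the dangerous object $e^{-iH^S t}Q_j e^{iH^S t}$ (time dependent, unbounded) disappears; this is the step that makes the unbounded case tractable.

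Next I would expand the $k$-fold nested commutator $\comm{V(t_1)}{\comm{\cdots}{\comm{V(t_k)}{X\otimes\mathbb{I}}}}$ into its $2^k$ ordered products of the form $(\text{some } V\text{'s})\cdot(X\otimes\mathbb{I})\cdot(\text{remaining } V\text{'s})$. Because $(A\otimes B)(A'\otimes B') = AA'\otimes BB'$, each such product factorizes, after summing over $j_1,\dots,j_k\in\{1,\dots,\nu\}$, into a system operator $Q_{j_1}\cdots Q_{j_a} X Q_{j_{a+1}}\cdots Q_{j_k}$ tensored with a reservoir operator $R_{j_1}(t_1)\cdots R_{j_k}(t_k)$. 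For the reservoir factor I use that in the Fermi (CAR) case creation and annihilation operators are bounded, $\|a^\#_p(e^{-iht}f)\|\le\|f\|$, so $\|R_j(t)\|\le M:=\max_j\sum_p(\|f_{jp}\|+\|g_{jp}\|)$ uniformly in $t$, whence the reservoir product has norm $\le M^k$.

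For the system factor I aim at a bound \emph{uniform in the argument}, which is what is needed for the bounded closure and is the main obstacle. Writing $X=\sum_l(\cdot,\phi_l)\psi_l$ with $\phi_l,\psi_l\in\DD$, I apply the operator to $\psi\in\DD$ and move the rightmost block of $Q$'s off $\psi$ using Condition \ref{cond:PermutForHermite}, namely $(Q_{j_{a+1}}\cdots Q_{j_k}\psi,\phi_l)=(\psi,Q_{\sigma(j_k)}\cdots Q_{\sigma(j_{a+1})}\phi_l)$, all intermediate vectors remaining in $\DD$ by part 1). Then part 2) of Condition \ref{cond:Interaction}, applied to the fixed vectors $\phi_l,\psi_l$, yields $\|Q_{j_1}\cdots Q_{j_a}X Q_{j_{a+1}}\cdots Q_{j_k}\psi\|\le c_X^k K_X\|\psi\|$ with $c_X=\max_l\max(c_{\phi_l},c_{\psi_l})$ and $K_X=\sum_l\|\phi_l\|\|\psi_l\|$, both depending only on $X$. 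The key point is that routing the derivatives onto the fixed data of $X$ makes the constant independent of $\psi$; a direct estimate $\|Q_{j_{a+1}}\cdots Q_{j_k}\psi\|\le c_\psi^{k-a}\|\psi\|$ would leave a constant $c_\psi$ that is not bounded over $\DD$ (cf. Example \ref{example:CommutativeCase}, where $c_\psi$ is the momentum-support radius of $\psi$) and would only give convergence for each fixed $y$, not boundedness. Thus the system factor extends to a bounded operator of norm $\le c_X^k K_X$.

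Finally I would assemble the estimates. Each term of the expansion is a bounded operator on $\HH^S\otimes\HH^R$ of norm $\le \lambda^k c_X^k M^k K_X$; the integrand is operator-norm continuous in $(t_1,\dots,t_k)$ (strong continuity of $e^{-iht}$ together with $\|a^\#(f)-a^\#(f')\|\le\|f-f'\|$), so the Bochner integral over $\Delta_k(t)$, of volume $t^k/k!$, is well defined, which establishes that the series is correctly defined. Summing the $2^k\nu^k$ terms and then over $k$ gives
$$\|Z(t)\|\le\sum_{k\ge 0}\frac{t^k}{k!}\,(2\nu\lambda c_X M)^k\,K_X = K_X\,e^{2\nu\lambda c_X M t},$$
finite and bounded uniformly for $t$ in any compact interval. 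This majorant proves absolute and uniform (on compacts) convergence of the series applied to any $y\in\DD\otimes\HH^R$ (part 1), and since the series then converges in operator norm to a bounded operator that agrees with the commutator expression on the dense domain $\DD\otimes\HH^R$, it shows that $Z(t)$ has a bounded closure (part 2).
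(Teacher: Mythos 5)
Your proof is correct and follows essentially the same route as the paper's: reduce to finite-rank $X$, use the weak commutativity of Condition \ref{cond:Interaction} to make the system factors time-independent, route the $Q_j$'s acting on the right of $X$ onto the fixed vectors $\phi_l$ via the adjoint relation of Condition \ref{cond:PermutForHermite} so that the bound depends only on $X$, invoke boundedness of the CAR operators for the reservoir factor, and close with the factorial majorant. Your explicit expansion of the nested commutator and the remark on why the naive estimate $\|Q_{j_{a+1}}\cdots Q_{j_k}\psi\|\le c_\psi^{k-a}\|\psi\|$ would not yield boundedness simply spell out the induction the paper leaves implicit.
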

\begin{proof}
It is sufficient to prove the statement for  any $X = \br{\cdot, \phi } \psi \in \UU^S $. Consider $Q_j$, $j = 1, \dots, \nu$ from \eqref{eq:Interaction} and $ \eta \in \DD$. Then, with recalling Conditions \ref{cond:PermutForHermite} and \ref{cond:Interaction}, we have
\[
\big[ e^{-iH^S t} Q_j e^{ i H^S t }  ,  X \big] \eta = \br{\eta, \phi} Q_j \psi - \br{\eta, Q_{\sigma \br{j} } \phi } \psi ~, 
\]
where $\sigma$ is permutation from Condition \ref{cond:PermutForHermite}. Since $\DD$ is dense in $\HH^S$, we obtain that the operator $[ e^{-iH^S t} Q_j e^{ i H^S t }, X ]$ can be continued as an bounded operator on whole $\HH^S$. By induction, taking into account that creation and annihilation fermionic operators are bounded (continuous) operators,  we obtain that there exists a constant $C_X$ (which depends on $X$) such that
\[
 \int\limits_{ \D_k \br{t} } dt_1 \dots dt_k \norm{ \comm{ V \br{t_1} }{ \comm{ V \br{t_2}}{ \dots \comm{V \br{t_k} }{ X \otimes \mathbb{I}} \dots} } }  \leq  \frac{ \br{\lam t}^k \br{C_X}^k}{k!} ~. 
\]
The uniform and absolute convergence of the series \eqref{eq:DysonSeriesGeneral} follows from the Weierstrass M-test.
\end{proof}

The formal solution of \eqref{eq:vonNeumann} is  $Y \br{t} = e^{i H_0 t} Z \br{t} e^{-iH_0 t} $. Note that the expression $ H_0 Z \br{t} $ can be undefined when  $H_0$ is an unbounded operator, and then the derivative of $Y \br{t}$ will also be undefined. But even in this case $Y \br{t}$ represents the dynamic in the following sense. Since $\UU^S$ is dense in the algebra of system continuous observables in the strong operator topology, the evolution of a pure state $\phi \mapsto \phi \br{t}$ is explicitly determined (up to a phase factor) by the relation
\[
\br{ \phi, Y \br{t} \phi  } = \br{ \phi \br{t}, Y \phi \br{t}  } \quad  \forall Y \in \UU^S ~.  
\]
We get $\phi \br{t} = e^{-iH_0 t} \psi \br{t}$ and hence $\psi \br{t}$ is represented as the series
\begin{equation} \label{eq:EvolutPureState}
\psi \br{t} = \phi + \sum_{k=1}^{\infty} \br{-i}^k \int_{\D_k \br{t}} dt_1 \dots dt_k V \br{-t_1} \dots V \br{-t_k} \phi ~.
\end{equation}
Proof that series \eqref{eq:EvolutPureState} converges uniformly and absolutely is the same as proof of Lemma \ref{le:DS-converge}. Moreover, $\phi \br{t}$ is a weak solution in the following sense. For an arbitrary $\eta \in \DD \otimes D ( H^R) $ consider the function \cite{SimonReed::MethodModernMathPhII}
\[
f_{\eta} \br{t} = \br{ \phi \br{t}, \eta } ~.
\]
Then function $f_{\eta} \br{t}$ is differentiable for any $t$ and
\[
\frac{d f_{\eta} \br{t}}{dt} = \br{ \phi \br{t}, i \br{H_0 + \lam V} \eta } ~.
\]

Since $\mathring{\om}$ and $\Tau^R $ are continuous maps, using Lemmas \ref{le:DS-converge} and \eqref{eq:InvariantFreeEvState} we obtain the following form of the Dyson series for the reduced system dynamic \eqref{eq:ReducedDynamic}
\begin{eqnarray} 
\label{eq:DysonSeriesReducedDynExpandedForm}
\wt{\Tau}_{\lam} \br{t} X &=& \mathring{\om} \Big( \big( \Tau^S (\lam^{-2} t) \otimes \mathbb{I}  \big) \big( \mathbb{I} \otimes  \Tau^R (\lam^{-2} t)  \big) \big(  X \otimes \mathbb{I}  \nonumber \\ 
 &&+  \sum_{k=1}^{\infty} i^k \int\limits_{\D_k \br{\lam^{-2} t} } dt_1 \dots dt_k \left[ V \br{t_1},  \dots \left[ V \br{t_k}, X \otimes \mathbb{I}  \right] \dots \right]  \big)  \Big)    \nonumber \\ 
&=& \Tau^S (\lam^{-2} t) \Big( X  + \sum_{k=1}^{\infty} i^k \int\limits_{\D_k \br{\lam^{-2} t} } dt_1 \dots dt_k  \mathring{\om} \big(  \left[ V \br{t_1},  \dots \left[ V \br{t_k}, X \otimes \mathbb{I}  \right] \dots \right] \big) \Big)~.
\end{eqnarray}

Let us introduce some notations. Define subset $S'(k) \subset S(k)$  for $k=1,2,3,\dots$ recursively. The base is $S'(1) = S(1) = \{ \text{id} \}$. Consider two operators $S_{\pm}: S(k) \to S(k+1) $
\[
    \left\{ \begin{aligned}
    &S_{+} \br{\pi} = \br{1, \pi \br{1}+1, \pi \br{2}+1, \dots, \pi \br{k}+1 }, \\
    &S_{-} \br{\pi} = \br{ \pi \br{1}+1, \pi \br{2}+1, \dots, \pi \br{k}+1, 1 },
    \end{aligned} \right. ~\pi \in S \br{k} ~.
\]
By definition, let $S' \br{k+1} = S_{+} \br{ S' \br{k}} \cup S_{-} \br{ S' \br{k}}$. Note that $\abs{S' \br{k}} = 2^{k-1}$. 

By $1_k \leq \gamma \leq \nu 1_k$ for a multi-index $\gamma = (\gamma_1, \dots , \gamma_k)$ we mean that $ 1 \leq \gamma_j \leq \nu$ for $j = 1, \dots, k$. Define the operator $F_{\gamma, \pi} \br{t_1, \dots, t_k}$ for $\pi \in S' \br{k}$, $1_k \leq \gamma \leq \nu 1_k$  in the reservoir Hilbert space by
\[
F_{\gamma, \pi } \big( t_1, \dots , t_k \big) = F_{\gamma_{\pi(1) }} \big( t_{\pi (1) } \big) \dots F_{\gamma_{\pi (k) }} \big( t_{\pi \br{k}} \big) ~,
\]
where $F_{j} ( \tau ) = \Tau^{R} ( -\tau ) F_j $. The operator $Q_{\gamma, \pi} (t)$ is denoted recursively by
\begin{eqnarray*}
Q_{\gamma, \text{id} } [ A ] &=&  [ Q_{\gamma} , A ], ~ \gamma = 1, 2, \dots, \nu ~.\\
Q_{\gamma, \pi}   [ A ]  &=& \left\{  \begin{aligned}
& Q_{\gamma_1} \times Q_{ \gamma', \pi' }  [ A ],~ \pi = S_{+} \br{\pi'} ~, \\
& -Q_{ \gamma', \pi' }  [A ] \times Q_{\gamma_1},~ \pi = S_{-} \br{\pi'} ~,
\end{aligned} ~ \right. \quad  \gamma = (\gamma_1, \gamma') ~.
\end{eqnarray*}
With these notations we rewrite series \eqref{eq:DysonSeriesReducedDynExpandedForm} in WCL as follows
\begin{eqnarray} \label{eq:DysonSeriesReducedDynamic}
    \wt{\Tau}_{\lam} (t) X &=& \Tau^S (\lam^{-2} t)    \Big( X \nonumber \\
    &&+ \sum_{k=1}^{\infty} i^k \sum_{1_k \leq \gamma \leq \nu 1_k}  \sum_{ \pi \in S' \br{k}  }  \int\limits_{\D_k \br{\lam^{-2}t}}  dt_1 \dots dt_k Q_{\gamma, \pi} \left[ X \right] \om  \big( \lam^{k} F_{\gamma, \pi} \br{t_1, t_2, \dots, t_k}    \big)  \Big) ~.
\end{eqnarray}
The series \eqref{eq:DysonSeriesReducedDynamic} for a boson reservoir is understood as a formal term by term correspondence with the corresponding terms of the series \eqref{eq:DysonSeriesReducedDynExpandedForm}. We prove that the series \eqref{eq:DysonSeriesReducedDynamic} converges for both case of Fermi and Bose reservoirs.

\section{Reservoir Correlation Functions in the Weak Coupling Limit}\label{Sec3:Statistics}
In this section, we derive the limit of correlation functions of the reservoir which appear in the Dyson series. Details of the proof are provided in the Appendix.

Let $\pi \in S' \br{k}$ and $\gamma$ be a multi-index such that  $1_k \leq \gamma \leq \nu 1_k$. Consider the following complex-valued function $\Om_{\gamma, \pi}$ of real argument $\lam \in \br{0, 1}$
\begin{equation} \label{eq:DysonSeriesReservoirIntegral}
\Om_{\gamma, \pi} \br{\lam} = \lam^{k} \int\limits_{ \D_k \br{ \lam^{-2} t}} \om \br{F_{\gamma, \pi} \br{t_1, \dots, t_k}  }  dt_1 \dots dt_k ~.
\end{equation}
In this section we derive the limit $\Om_{\gamma, \pi} \br{\lam}, \lambda \downarrow 0 $ for fixed $t$.
Taking into account the Condition \ref{cond:InitState}, we immediately obtain that if $\pi \in S' \br{2k+1}$ then $\Om_{\gamma, \pi} \br{\lam} = 0$. Thus we need to consider only $\pi \in S' \br{2k}$.

We will use the functions
\begin{eqnarray*}
 \chi \br{\tau} &=& \frac{1}{ \br{1 + \abs{\tau} }^\frac{3}{2} } ~, \\
 G_{k, p} \br{\lam} &=&  \lam^{2p} \int\limits_{S_k \br{\lam^{-2}t}} dt_1 \dots dt_k \chi \br{t_1} \dots \chi \br{t_k} \br{t_1 + \dots + t_k}^{p}, ~ k \in \NN, p \geq 1 ~.
\end{eqnarray*}
We have the following estimates.
\begin{lemma} \label{le:EstimateIntegral}
For all  $\lam \in \br{0, 1}$ the following inequalities hold
\begin{eqnarray*}
\lam^{2p} \int\limits_0^{ \lam^{-2}t } \tau^p \chi \br{\tau} d \tau & \leq & 2  t^{p -\frac{1}{2}} \lam ~, \\
G_{k, p} \br{\lam} & \leq &  k 2^{k+p}  t^{p-\frac{1}{2}} \lam~.
\end{eqnarray*}
\end{lemma}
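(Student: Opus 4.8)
My plan is to prove the two inequalities separately, the first being essentially a one-variable computation and the second reducing to it by exploiting the simplex constraint together with symmetry.

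For the first inequality I would simply discard the $+1$ in $\chi$: since $\chi\br{\tau}=\br{1+\abs{\tau}}^{-3/2}\le\tau^{-3/2}$ for $\tau>0$, the integrand obeys $\tau^p\chi\br{\tau}\le\tau^{p-3/2}$, which is integrable at the origin precisely because $p\ge1$. Then $\int_0^{\lam^{-2}t}\tau^{p-3/2}\,d\tau=\br{\lam^{-2}t}^{p-1/2}/\br{p-1/2}$, and multiplying by $\lam^{2p}$ converts $\br{\lam^{-2}t}^{p-1/2}$ into $t^{p-1/2}\lam$; the elementary bound $1/\br{p-1/2}\le2$, valid for $p\ge1$, supplies the constant $2$.

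For the second inequality, write $T:=\lam^{-2}t$. The crucial observation is that on $S_k\br{T}$ one has $t_1+\dots+t_k<T$, so $\br{t_1+\dots+t_k}^{p}\le T^{p-1}\br{t_1+\dots+t_k}=T^{p-1}\sum_{j=1}^k t_j$; this is where the simplex constraint does the essential work, and it is the reason I would keep the domain $S_k$ rather than relaxing the entire integrand $\br{t_1+\dots+t_k}^p\prod_i\chi\br{t_i}$ to a product (the latter produces a worse factor $k^p$). Substituting this bound and using that $\prod_i\chi\br{t_i}$ is symmetric under permutations collapses the $k$ terms $\sum_j t_j$ into a single one with prefactor $k$. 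The resulting integrand $\chi\br{t_1}t_1\prod_{i\ge2}\chi\br{t_i}$ is now a genuine product, so I would enlarge $S_k\br{T}$ to the cube $\br{0,T}^k$ and factorize, obtaining $k\,T^{p-1}\br{\int_0^T\chi\br{\tau}\tau\,d\tau}\br{\int_0^T\chi\br{\tau}\,d\tau}^{k-1}$.

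The final step is bookkeeping of the powers of $\lam$, which is the only place demanding care. I would bound $\int_0^T\chi\br{\tau}\,d\tau\le2$ for each of the $k-1$ ``light'' factors, apply the first inequality with exponent $1$ to get $\int_0^T\chi\br{\tau}\tau\,d\tau\le2t^{1/2}\lam^{-1}$, and write $T^{p-1}=t^{p-1}\lam^{-2p+2}$. Multiplying in the prefactor $\lam^{2p}$, the exponent of $\lam$ comes out as $2p+\br{-2p+2}+\br{-1}=1$ and the exponent of $t$ as $\br{p-1}+\tfrac12=p-\tfrac12$, giving $G_{k,p}\br{\lam}\le k2^{k}t^{p-1/2}\lam\le k2^{k+p}t^{p-1/2}\lam$. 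I expect the only real obstacle to be this reconciliation of the three competing powers of $\lam$ — the prefactor $\lam^{2p}$, the growth $\lam^{-2p+2}$ from $T^{p-1}$, and the $\lam^{-1}$ from the loaded integral — whereas the combinatorial factor $k$ (from symmetry) and the geometric factor $2^k$ (from the light integrals) appear automatically. Note that the constant $2^{k+p}$ is not sharp along this route, since the cleaner estimate already yields $2^k$, so I would not attempt to reproduce the exact stated form of the constant.
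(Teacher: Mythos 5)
Your proof is correct, and for the second inequality it takes a genuinely different route from the paper. For the first inequality the paper splits the integral at $\tau=1$, bounding $\chi\le 1$ on $\br{0,1}$ and $\chi\br{\tau}\le\tau^{-3/2}$ beyond, then recombines the two pieces; your single bound $\tau^p\chi\br{\tau}\le\tau^{p-3/2}$ on the whole range is cleaner and even sidesteps the implicit requirement $\lam^{-2}t\ge 1$ in the paper's splitting. For the second inequality the paper argues by induction on $k$: it expands $\br{t_1+\br{t_2+\dots+t_k}}^p$ binomially, handles the terms with $t_1^q$, $q\ge1$, by enlarging to the cube and invoking the first inequality, and absorbs the $q=0$ term into $2G_{k-1,p}\br{\lam}$ via the induction hypothesis, which is where the extra factor $2^p$ from $\sum_q\binom{p}{q}$ enters. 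Your replacement of the induction by the pointwise bound $\br{t_1+\dots+t_k}^p\le T^{p-1}\sum_j t_j$ on the simplex, followed by the permutation-symmetry collapse of the sum to $k$ copies of a single term and factorization over the cube, is a valid shortcut: the $\lam$-bookkeeping ($\lam^{2p}\cdot\lam^{-2p+2}\cdot\lam^{-1}=\lam$) checks out, and you correctly obtain the sharper constant $k2^k\le k2^{k+p}$. The paper's inductive structure buys nothing here beyond matching the stated constant; your argument is shorter and yields the same (indeed a slightly stronger) estimate, which is all that the subsequent lemmas require.
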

Proof of Lemma \ref{le:EstimateIntegral} is given in Appendix \ref{app:ProofLemma}.

\begin{lemma} \label{le:fugacityIneq}
Suppose that $f, g \in \mathfrak G$ and that $\om$ satisfies Condition \ref{cond:InitState}. Then there exists a constant $C$ such that 
\begin{equation} \label{eq:fugacityIneq}
\Big| \om  \big(  a^{\dagger}  ( e^{it_1 h} f ) a ( e^{i t_2 h}  g )   \big) \Big| , \Big| \om \big( a  ( e^{i t_2 h}  g )  a^{\dagger} ( e^{it_1 h} f ) \big) \Big| \leq C  \chi ( t_1 - t_2 )~,
\end{equation}
where $h = h_1, h_2$.
\end{lemma}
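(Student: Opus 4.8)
The plan is to reduce both quantities to a single scalar oscillatory integral and then bound it by stationary-phase-type arguments, treating the two dispersions $h_1$ and $h_2$ separately. First I would use the gauge invariance in Condition \ref{cond:InitState}. Since $h$ acts as multiplication by $h(\bk)$ in the momentum representation, $e^{ith}$ multiplies by $e^{ith(\bk)}$, so the gauge-invariant two-point function gives directly
\[
\om\br{a^\dagger(e^{it_1 h}f)\,a(e^{it_2 h}g)} = \int_{\RR^3}\rho(h(\bk))\,e^{i(t_1-t_2)h(\bk)}\,f(\bk)\,\overline{g(\bk)}\,d\bk~.
\]
For the reversed product I would commute the two operators using the CAR (Fermi) or CCR (Bose) together with gauge invariance, which replaces $\rho(h(\bk))$ by $1-\rho(h(\bk))$ (Fermi) or $1+\rho(h(\bk))$ (Bose). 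In every case the object to estimate is $I(\tau)=\int_{\RR^3}e^{i\tau h(\bk)}\Phi(\bk)\,d\bk$ with $\tau=t_1-t_2$ and amplitude $\Phi=\rho(h)\,f\,\overline g$ or $(1\mp\rho(h))\,f\,\overline g$. Because $f,g\in\mathfrak G$ are smooth with compact support in momentum and $\rho\in\SS(\RR)$, the amplitude $\Phi$ is integrable with compact support, so it remains only to prove the decay $\abs{I(\tau)}\le C\chi(\tau)$.

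For $h=h_2=\abs{\bk}^2$ the phase has a single nondegenerate critical point at the origin, and I would obtain the sharp rate by writing $\Phi$ as the Fourier transform of a Schwartz function and invoking the Fresnel identity $\int_{\RR^3}e^{i\tau\abs{\bk}^2}e^{i\bk\cdot\bx}\,d\bk=(\pi/\tau)^{3/2}e^{3i\pi/4}e^{-i\abs{\bx}^2/(4\tau)}$. Interchanging the integrations then yields $\abs{I(\tau)}\le C\abs{\tau}^{-3/2}$, with $C$ proportional to the $L^1$-norm of the inverse transform of $\Phi$; this exactly matches $\chi$.

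For $h=h_1=\abs{\bk}$ there is no interior stationary point, so I would instead pass to spherical coordinates and reduce to the one-dimensional integral $I(\tau)=\int_0^\infty e^{i\tau r}\psi(r)\,dr$, where $\psi(r)=r^2\int_{S^2}\Phi(r\omega)\,d\omega$. The factor $r^2$ together with the smoothness of the spherical average gives $\psi(0)=\psi'(0)=0$, so two integrations by parts produce $\abs{I(\tau)}\le\abs{\tau}^{-2}\norm{\psi''}_{L^1}$, which is even faster than required. Finally, since $\abs{I(\tau)}\le\norm{\Phi}_{L^1}$ for all $\tau$ and $\chi$ is bounded below on any compact set, I would combine the small-$\tau$ bound with the large-$\tau$ decay to get a single constant $C$ with $\abs{I(\tau)}\le C\chi(\tau)$ uniformly, which is the claim \eqref{eq:fugacityIneq}.

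The main obstacle will be the low regularity of the amplitude at the origin in the case $h_1=\abs{\bk}$: the function $\rho(\abs{\bk})$ is generally not smooth at $\bk=0$ (it carries an $\abs{\bk}$-type term), so I must verify that after the angular average the radial amplitude $\psi$ is still $C^2$ up to $r=0$ with vanishing value and first derivative there, which is exactly what legitimizes the two integrations by parts. A secondary point to check is that the Fresnel computation for $h_2$ and the commutation step in the Bose case, where $a$ and $a^\dagger$ are unbounded, are justified on the form factors drawn from $\mathfrak G$; both follow from $f,g$ and $\rho$ being Schwartz-class.
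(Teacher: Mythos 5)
Your proposal is correct and follows essentially the same route as the paper: reduce both orderings to a single oscillatory integral $\int e^{i\tau h(\bk)}\Phi(\bk)\,d\bk$ via gauge invariance and the CAR/CCR, then obtain the $\chi(\tau)$ decay by a stationary-phase estimate for $h_2$ and a non-stationary-phase (spherical-coordinate) estimate for $h_1$, finally combining with the trivial $L^1$ bound for small $\tau$. The paper simply cites the stationary phase method and Theorem XI.14 of Reed--Simon where you spell out the Fresnel identity and the two integrations by parts (correctly noting that the $r^2$ Jacobian restores the needed regularity and vanishing at $r=0$), so your version is a more explicit rendering of the same argument.
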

\begin{proof}  First consider the case of Fermi particles and $h = h_2$. By definition we have 
\[
    \om \big( a^{\dagger} ( e^{i t_1 h} f )  a  ( e^{i t_2 h} g) \big) =   \int\limits_{\RR^3}  \rho \big( |k|^2 \big) f(k) \overline{g(k)} e^{ i(t_1-t_2) |k|^2  } dk~.
\]
Using the anti-commutation relation for Fermi creation and annihilation operators, we get 
\[
\om \big( a (e^{i t_1 h} f)  a^{\dagger}  (e^{i t_2 h} g) \big) = \big( e^{i t_2 h} g,  e^{i t_1 h} f \big) - \om \big( a^{\dagger} ( e^{i t_1 h} f)  a (e^{i t_2 h} g) \big) ~.
\]
Then inequality \eqref{eq:fugacityIneq} follows from the stationary phase method \cite{SimonReed::MethodModernMathPhIII, Fedoryuk::SaddlePointMethod} with respect to the integration function $  {f} (k) \overline{ g (k) }  \rho ( \abs{ k}^2 )$ which belongs to $\mathfrak G$.

Consider case $h = h_1$.  By definition we obtain
\[
    \om \big( a^{\dagger} ( e^{i t_1 h} f )  a ( e^{i t_2 h} g)  \big) = \int\limits_{\RR^3}   f (k) \overline{  g  (k) } \rho (\abs{ k} ) e^{ i (t_1 - t_2) \abs{ k}  } dk ~.
\]
In spherical coordinates with integration over angle variables
\begin{equation} \label{eq:tempIntegral}
    \om \big( a^{\dagger}  ( e^{i t_1 h} f )  a  ( e^{i t_2 h} g ) \big) = \int\limits_{0}^{\infty}   y (r) e^{ i ( t_1 - t_2) r  } dr ~,
\end{equation}
where $ y(r) $ is a finite infinitely differentiable function. Applying Theorem XI.14 from Ref. \onlinecite{SimonReed::MethodModernMathPhIII}  to the integral \eqref{eq:tempIntegral} we obtain the required. The case of Bose particle can be considered similarly. 
\end{proof}

\begin{remark}
 In the case $h = h_1$ function $ c ( \D t ) = \om ( a^{\dagger} ( e^{i h \D t} f)  a (g) )$ decreases faster than any inverse  polynomial\cite{SimonReed::MethodModernMathPhIII}. But the above estimate is sufficient for our subsequent analysis. Informally speaking, the main result of the paper requires for the function $c ( \D t )$ to decrease fast enough to be integrable on $\br{- \infty, +\infty}$.
\end{remark}

Denote $ C = \max \big\{ C ( f_{ip}, g_{jp}, \om), 1 \leq i, j \leq \nu, p = 1, 2 \big\} $,
where $C ( f_{ip} , g_{jp}, \om )$ is constant from Lemma \ref{le:fugacityIneq} for functions $f_{ip}, g_{jp}$ and state $\om$.

\begin{lemma} \label{le:ZeroFD}
Assume that Condition \ref{cond:Formfactors} holds and $\om$ satisfies Condition \ref{cond:InitState}. Let $(n_j, m_j)$ be  a partitioning $\left\{1, 2, \dots, 2k \right\}$ into pairs.  Consider the functions $\{ \phi_j \}_{j=1}^{2k}$,  $\phi_j \in  \{ f_{ip}, g_{ip}, 1 \leq i \leq \nu, p = 1, 2 \}$ and 
\begin{eqnarray}  \label{eq:lemmaFDGeneralInt}
\Phi (\lam) &=&  \lam^{2k} \int\limits_{\D_{2k} ( \lam^{-2}t) } \om \big( a^{\#} ( e^{-i t_{n_1} h } \phi_{1}  )  a^{\#} ( e^{- i t_{m_1} h } \phi_{2}  )  \big)  \om \big( a^{\#} ( e^{-i t_{n_2} h } \phi_{3}  )  a^{\#} ( e^{-i t_{m_2} h } \phi_{4}  )  \big) \dots \nonumber  \nonumber \\
&&   \dots \times \om \big( a^{\#} ( e^{-i t_{n_k} h } \phi_{(2k-1)}  )  a^{\#}  ( e^{-i t_{m_k} h } \phi_{2k}  )  \big) dt_1 \dots dt_{2k} ~.
\end{eqnarray}
If $ | n_j - m_j | > 1$ for some $j$, then for any $\lam \in \br{0, 1}$ the following estimate holds:
\[
\abs{\Phi \br{\lam} } \leq \frac{ \br{16 C}^k t^{k - \frac{1}{2}}  }{  \br{k-1}! } \lam ~.
\]
\end{lemma}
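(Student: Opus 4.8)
The plan is to reduce the statement to a scalar estimate and then extract the extra power of $\lam$ from the non-adjacent pair. By gauge invariance each two-point factor in \eqref{eq:lemmaFDGeneralInt} vanishes unless it is of mixed type $\om(a^{\dagger}a)$; if any factor is not mixed the whole product is $0$ and there is nothing to prove, so I may assume all factors are mixed. For those, Lemma~\ref{le:fugacityIneq} gives $\big|\om(\cdots)\big|\le C\,\chi(t_{n_j}-t_{m_j})$ (the arguments enter as $e^{-it_{n_j}h}$, $e^{-it_{m_j}h}$ and $\chi$ is even). Writing $T=\lam^{-2}t$ and taking $n_j<m_j$, it then suffices to bound
\[
|\Phi(\lam)|\le C^{k}\,\lam^{2k}\int\limits_{\D_{2k}(T)}\prod_{j=1}^{k}\chi\br{t_{n_j}-t_{m_j}}\,dt_1\cdots dt_{2k}.
\]
I would pass to the gap variables $s_i=t_i-t_{i+1}$ for $i<2k$ and $s_{2k}=t_{2k}$ (unit Jacobian), which turn $\D_{2k}(T)$ into $\{s_i>0,\ \sum_i s_i<T\}$ and each difference into an interval sum $t_{n_j}-t_{m_j}=\sum_{i=n_j}^{m_j-1}s_i$.

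The naive bound $\chi\big(\sum_{i=n_j}^{m_j-1}s_i\big)\le\chi(s_{n_j})$ pins the $k$ distinct left-endpoint gaps with the integrable weight $\int_0^{\infty}\chi=2$ and leaves the $k$ right-endpoint gaps free in a simplex of size $T$; this only gives $O(1)$ and loses the gain. The extra $\lam$ must come from the hypothesis $|n_{j}-m_{j}|>1$. The key analytic inputs are the sub-multiplicativity $\chi(a+b)\le\sqrt{\chi(a)\chi(b)}$ (equivalent to $(1+a)(1+b)\le(1+a+b)^{2}$) and the fact that the half-power weight is only borderline non-integrable, $\int_0^{T}\sqrt{\chi(s)}\,ds=O\big(T^{1/4}\big)$. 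For one bad pair $(n_0,m_0)$ with $m_0-n_0\ge2$ I would split $\chi(D_0)\le\chi(s_{n_0}+s_q)\le\sqrt{\chi(s_{n_0})}\,\sqrt{\chi(s_q)}$ with a second index $q\in\{n_0+1,\dots,m_0-1\}$, keep the full weight $\chi(s_{n_j})$ on the remaining $k-1$ left-endpoint gaps, and integrate everything else freely.

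The main obstacle is to guarantee that $q$ can be taken to be a \emph{free} (right-endpoint) gap, distinct from every pinned $s_{n_{j}}$; otherwise no free direction is removed and the gain disappears. For non-crossing partitions this is automatic, but crossings (e.g.\ $(1,3),(2,4)$) can fill the interior of a bad interval entirely with left endpoints. I would resolve this by choosing, among all bad pairs, the one with the \emph{largest} right endpoint $m_0$. Its interior is nonempty, and if every interior point were a left endpoint each would be paired to the right of $m_0$, producing a length-$\ge2$ pair with right endpoint exceeding $m_0$ and contradicting maximality. Hence some interior point $q$ is a right endpoint, so $s_q$ is free and distinct from $s_{n_0}$ and all $s_{n_{j}}$.

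With this choice the $2k$ gaps split into $k-1$ fully pinned gaps (weight $2$ each), the two half-pinned gaps $s_{n_0},s_q$ (each $O(T^{1/4})$ on $[0,T]$), and $k-1$ free gaps (simplex volume $T^{k-1}/(k-1)!$); enlarging the domain to decouple these factors only enlarges the integral. Collecting powers yields the decisive cancellation $\lam^{2k}T^{k-1/2}=\lam\,t^{k-1/2}$, and bounding $(1+T)^{1/2}\le \mathrm{const}\cdot T^{1/2}$ gives $|\Phi(\lam)|\le \frac{(16C)^{k}t^{k-1/2}}{(k-1)!}\,\lam$ once the numerical constants are absorbed into $16^{k}$ (legitimate for $k\ge2$, the only non-vacuous range, since for $k=1$ the single pair $(1,2)$ is adjacent). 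I expect the combinatorial selection of the bad pair in the crossing case to be the only genuinely delicate point; the rest is the bookkeeping of powers of $\lam$ and $T$ described above.
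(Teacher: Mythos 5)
Your argument is correct, but it extracts the crucial factor of $\lam$ by a mechanism genuinely different from the paper's. The paper changes variables to the pair differences $\tau_i=t_{n_i}-t_{m_i}$ together with $k$ of the original times, chosen so that one of them is an interior point of the bad pair's interval; that variable is then squeezed into a range of length $\abs{\tau_j}$, and the gain comes from the first-moment bound $\lam^2\int_0^{\lam^{-2}t}\tau\chi\br{\tau}d\tau\leq 2\sqrt{t}\,\lam$ of Lemma~\ref{le:EstimateIntegral}. Because the constraint on the interior time holds no matter whether that index is a left or a right endpoint of its own pair, the paper needs no selection argument at all, at the price of a non-unit Jacobian (bounded by $2^{2k}$). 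You instead work in consecutive-gap coordinates (unit Jacobian) and gain the $\lam$ from the sub-multiplicativity $\chi\br{a+b}\leq\sqrt{\chi\br{a}\chi\br{b}}$, which splits the bad pair's weight into two borderline factors each integrating to $O\br{T^{1/4}}$ on $\br{0,T}$; this removes one free simplex direction and produces the same cancellation $\lam^{2k}T^{k-1/2}=\lam\, t^{k-1/2}$. The cost of your route is that the second half-weight must land on a \emph{free} (right-endpoint) gap, and your maximal-right-endpoint choice is exactly what is needed to handle crossing partitions where this is not automatic; that argument is correct and is the only delicate point you face, whereas the paper's mechanism sidesteps it entirely. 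Two small loose ends in your write-up, neither fatal: the step $\br{1+T}^{1/2}\leq\mathrm{const}\cdot T^{1/2}$ requires $T=\lam^{-2}t\geq 1$, but in the complementary regime $\lam>\sqrt{t}$ your own ``naive'' $O(1)$ bound $\br{2Ct}^k/k!$ already implies the claimed estimate, so a one-line case distinction closes this; and ``mixed type'' should be read as one creation and one annihilation operator in \emph{either} order, since gauge invariance kills only $\om\br{aa}$ and $\om\br{a^{\dagger}a^{\dagger}}$ — Lemma~\ref{le:fugacityIneq} covers both surviving orderings, so your reduction stands.
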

\begin{proof}
Without loss of generality we assume that  $n_j - m_j > 1$ for some $j$. Then there exists $j^*$ such that either $m_j < n_{j^*} < n_j$ or $m_j < m_{j^*} < n_j$. Consider the case $m_j < n_{j^*} < n_j$; another case can be considered similarly.  Make change of the variable in the integral \eqref{eq:lemmaFDGeneralInt}.  Consider an array $M = \left\{ m_i, i \neq j^* \right\} $ and let $m'_1, m'_2, \dots m'_{k-1}$ be elements of $M$ sorted in the increasing order. Introduce the new variables  $s_i = t_{ m'_i }$ for $i < k$, $s_k = t_{n_{j^*}}$, and  $\tau_i = t_{n_i} - t_{m_i}$. Each row of the Jacobi matrix corresponding to this change of variable contains at most two $\pm 1$ with all other zero elements. So the absolute value of the determinant of the Jacobi matrix is at most $2^{2k}$. Using  Lemma \ref{le:fugacityIneq}  we get 
\begin{eqnarray*}
 \abs{ \Phi \br{\lam} } &\leq & \lam^{2k} 2^{2k} C^k \int\limits_{ \br{-\lam^{-2}t, \lam^{-2}t}^k } d \tau_1 \dots d \tau_k \int\limits_{ \D_{k-1} \br{\lam^{-2} t} } d s_1 \dots ds_{k-1} \int\limits_{ t_{n_j} }^{t_{m_j}} d s_k \chi \br{\tau_1} \dots \chi \br{\tau_k}  \\
    &\leq & \lam^{2k} 4^k C^k \frac{ \br{\lam^{-2}t}^{k-1}  }{\br{k-1}!} \int\limits_{ \br{-\lam^{-2}t, \lam^{-2}t}^k } | \tau_j | \chi \br{\tau_1} \dots \chi \br{\tau_k}   d \tau_1 \dots d \tau_k   \\
    &\leq & \frac{ \br{4C}^k t^{k-1} }{ \br{k-1}! } \left( \int\limits_{-\infty}^{+\infty} \chi \br{\tau} d \tau  \right)^{k-1} \br{ 2 \lam^2 \int\limits_0^{\lam^{-2} t} \tau \chi \br{\tau} d \tau  }~.
\end{eqnarray*}
Finally, taking into account Lemma \ref{le:EstimateIntegral} we obtain
\[
    \abs{ \Phi \br{\lam} } \leq \frac{ \br{4 C}^k t^{k - 1} }{  \br{k-1}!} 4^{k-1} \br{4 \lam t^{\frac{1}{2}}} = \frac{ \br{16 C}^k t^{k - \frac{1}{2}}  }{ \br{k-1}!} \lam ~.
\]
\end{proof}
In a similar way we get the following estimate.
\begin{lemma} \label{le:FD-Estimate}
Under the conditions of Lemma \ref{le:ZeroFD}, one has
\[
\abs{ \Phi \br{\lam} } \leq \frac{ \br{16C t}^k }{ k! }~.
\]
\end{lemma}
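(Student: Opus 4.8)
My plan is to follow the same route as the proof of Lemma~\ref{le:ZeroFD}, but to discard the mechanism that produced the extra factor of $\lam$ and instead bound every difference variable by its integral over the whole real line. First I would start from the definition \eqref{eq:lemmaFDGeneralInt} and apply Lemma~\ref{le:fugacityIneq} to each of the $k$ two-point correlators, using that each $\om\big(a^{\#}(e^{-it_{n_j}h}\phi)\,a^{\#}(e^{-it_{m_j}h}\psi)\big)$ is bounded in absolute value by $C\,\chi(t_{n_j}-t_{m_j})$, with $C$ the maximum of the constants of Lemma~\ref{le:fugacityIneq} over the relevant form-factors. This gives
\[
\abs{\Phi\br{\lam}} \le \lam^{2k} C^k \int\limits_{\D_{2k}\br{\lam^{-2}t}} \chi\br{t_{n_1}-t_{m_1}} \dots \chi\br{t_{n_k}-t_{m_k}}\, dt_1 \dots dt_{2k}~.
\]

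Next I would perform exactly the same linear change of variables as in Lemma~\ref{le:ZeroFD}: introduce the $k$ difference variables $\tau_j = t_{n_j}-t_{m_j}$ together with $k$ anchor variables chosen among the original times (say the $t_{m_j}$), relabelled in increasing order $s_1 < \dots < s_k$. As observed there, each row of the Jacobi matrix has at most two nonzero entries equal to $\pm 1$, so the modulus of its determinant is at most $2^{2k}$. Since $\chi \ge 0$, I can enlarge the domain of integration: each $\tau_j$ is allowed to range freely over $\RR$, while the sorted anchors automatically inherit a total order from the full ordering in $\D_{2k}$ and hence lie in the simplex $\D_k(\lam^{-2}t)$ of volume $(\lam^{-2}t)^k/k!$. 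The integral then factorizes, each difference contributing $\int_{-\infty}^{\infty}\chi(\tau)\,d\tau = 4$ (finite precisely because $\chi(\tau)=(1+\abs{\tau})^{-3/2}$ is integrable) and the anchors contributing the simplex volume. Collecting the constants gives
\[
\abs{\Phi\br{\lam}} \le \lam^{2k} C^k\, 2^{2k}\, 4^k\, \frac{\br{\lam^{-2}t}^k}{k!} = \frac{\br{16Ct}^k}{k!}~,
\]
where crucially the factors $\lam^{2k}$ and $(\lam^{-2}t)^k$ cancel, leaving a $\lam$-independent bound.

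The only genuinely delicate point — and the real difference from Lemma~\ref{le:ZeroFD} — is the bookkeeping of the change of variables, specifically verifying that the $k$ anchor positions fill \emph{at most a simplex} of volume $(\lam^{-2}t)^k/k!$ rather than a full cube. The $1/k!$ is exactly what makes the majorant summable in $k$, so this factorial saving is what ultimately licenses the Weierstrass M-test for the Dyson series. In Lemma~\ref{le:ZeroFD} one anchor was instead confined to the short interval $[t_{n_j},t_{m_j}]$ in order to extract the factor $\abs{\tau_j}$ and the accompanying power of $\lam$; here no such confinement is imposed, which is precisely why the present estimate does not vanish as $\lam\downarrow 0$ but provides the uniform-in-$\lam$ bound needed to control all terms of the series. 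I expect the rest to be routine, since every remaining step is a nonnegative-integrand enlargement of the integration region followed by arithmetic on the constants $2^{2k}4^k = 16^k$.
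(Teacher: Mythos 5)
Your argument is correct and follows essentially the same route as the paper's proof: bound each pair correlator by $C\chi(t_{n_j}-t_{m_j})$ via Lemma~\ref{le:fugacityIneq}, change to difference and anchor variables with Jacobian modulus at most $2^{2k}$, let the differences range over all of $\RR$ (contributing $4^k$) and the anchors over a simplex of volume $(\lam^{-2}t)^k/k!$, so that $\lam^{2k}$ cancels and $(16Ct)^k/k!$ results. Nothing is missing.
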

\begin{proof}
Making a similar change of the variable one gets
\begin{eqnarray*}
 \abs{\Phi \br{\lam}}  &\leq &  \lam^{2k} 2^{2k} C^k \int\limits_{ \br{-\lam^{-2}t, \lam^{-2}t}^k } d \tau_1 \dots d \tau_k \int\limits_{ \D_{k} \br{\lam^{-2} t} } d s_1 \dots ds_{k}  \prod_{i=1}^k \chi \br{\tau_k}  \\ 
    &\leq &  \lam^{2k} \br{4C}^k \frac{ \br{\lam^{-2}t}^k }{k!} \left( \int\limits_{-\infty}^{+\infty} \chi \br{\tau} d \tau \right)^k \leq \frac{ \br{16 C t}^k }{k!} ~.
\end{eqnarray*}
\end{proof}

\begin{lemma} \label{le:FD-Summ}
Assume that Condition \ref{cond:Formfactors} holds and $\om$ satisfies Condition \ref{cond:InitState}. Consider $\pi \in S' \br{2k}$, functions $ \{ \phi_j \}_{j=1}^{2k}$,  $\phi_j \in \left\{ f_{ip}, g_{ip}, 1 \leq i \leq \nu, p =1, 2 \right\}$, and  a monomial
\[
    P \br{ t_1, t_2, \dots, t_{2k}} = a^{\#} \big( e^{- i t_{\pi \br{1}} h } \phi_1  \big) a^{\#} \big( e^{ -i t_{\pi \br{2}} h } \phi_2  \big) \dots a^{\#} \big( e^{ -i t_{\pi \br{2k}} h } \phi_{2k}  \big) ~.
\]
Let $\sigma \in S \br{2k}$ be defined as 
\[
    \sigma \br{2j-1} = \min \left\{ \pi^{-1} \br{2j-1}, \pi^{-1} \br{2j}  \right\}, \sigma \br{2j} = \max  \left\{ \pi^{-1} \br{2j-1}, \pi^{-1} \br{2j}  \right\} ~, 
\]
let $\tau_j = t_{2j-1} - t_{2j}$ and 
\[
\psi_j ( \tau_j  )  = \om \big( a^{\#} ( e^{ -i t_{\pi \br{ \sigma \br{2j-1} }} h } \phi_{\sigma \br{2j-1}}   )  a^{\#} ( e^{- i t_{\pi \br{ \sigma \br{2j} }} h } \phi_{\sigma \br{2j}}   )  \big) ~. 
\]
Then for $\lam \in \left(0 , 1 \right)$ holds
\begin{equation} \label{eq:MonomialLimitFormula}
    \lam^{2k} \int\limits_{\D_{2k} \br{\lam^{-2} t} } d t_1 \dots d t_{2k} \om \br{ P \br{t_1, t_2, \dots, t_{2k}}  } = \frac{t^k}{k!} \prod_{j=1}^k \int\limits_0^{\infty} \psi_j \br{t} dt + R_k \br{\lam}
\end{equation}
and $R_k \br{\lam}$ can be estimated as
\begin{equation} \label{eq:EstimatedRes}
    \abs{ R_k \br{\lam} } \leq   \frac{ Ak \br{C_1 t}^k }{\sqrt{t}} \lam ~,
\end{equation}
where $A, C_1$ are some constants.
\end{lemma}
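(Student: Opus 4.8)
The plan is to reduce the $2k$-fold integral to the contribution of a single nearest-neighbour pairing, treating everything else as $O(\lambda)$ error by invoking Lemmas~\ref{le:ZeroFD} and~\ref{le:FD-Estimate}. First I would use the quasi-freeness of $\om$ (Condition~\ref{cond:InitState}, Eq.~\eqref{eq:quasi-free}) to expand $\om\br{P(t_1,\dots,t_{2k})}$ as a signed sum over all $(2k-1)!!$ pairings of the factors of $P$ into two-point functions. Gauge invariance kills every contraction except the $a^\dagger$--$a$ ones, and by Eq.~\eqref{eq:InvariantFreeEvState} each surviving two-point function depends only on the difference of the two time labels it connects. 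Substituting this expansion into the left-hand side of Eq.~\eqref{eq:MonomialLimitFormula} turns the single integral into a finite sum of integrals of exactly the form analysed in Lemmas~\ref{le:ZeroFD} and~\ref{le:FD-Estimate}.

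Next I would separate the one pairing that survives the limit. A pairing of the time indices $\{1,\dots,2k\}$ consists entirely of adjacent pairs $\abs{n_j-m_j}=1$ if and only if it is the unique consecutive pairing $\{1,2\},\{3,4\},\dots,\{2k-1,2k\}$; any other pairing has some block with $\abs{n_j-m_j}>1$ and is therefore bounded by $\frac{\br{16C}^k t^{k-1/2}}{(k-1)!}\lam$ via Lemma~\ref{le:ZeroFD}. Since there are at most $(2k-1)!!$ pairings and $(2k-1)!!/(k-1)! = \frac{1}{2^k}k\binom{2k}{k}\le k2^k$, the total contribution of all non-consecutive pairings is bounded by $k\br{32C}^k t^{k}\lam/\sqrt{t}$, which already has the shape of $R_k(\lam)$ in Eq.~\eqref{eq:EstimatedRes}.

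It remains to evaluate the consecutive pairing, which is the heart of the argument. The permutation $\sigma$ in the statement is precisely the one reindexing the $\pi$-ordered product so that positions $\sigma(2j-1)$ and $\sigma(2j)$ carry times $t_{2j-1}$ and $t_{2j}$; the corresponding contraction is $\psi_j(\tau_j)$ with $\tau_j=t_{2j-1}-t_{2j}>0$ on the simplex, the Fermi sign of this block being the one carried by $\sigma$. On $\D_{2k}(\lam^{-2}t)$ I would pass to difference--center coordinates, keeping $\tau_1,\dots,\tau_k$ together with the larger time of each pair as the $k$ center variables. Because each $\psi_j$ is dominated by $C\chi$ and hence integrable on $[0,\infty)$ (Lemma~\ref{le:fugacityIneq}), extending every $\tau_j$-integration from $[0,\lam^{-2}t]$ to $[0,\infty)$ costs a tail $\int_{\lam^{-2}t}^{\infty}\chi\le 2\lam/\sqrt{t}$ per factor, i.e. a total error of order $\frac{k(2C)^k t^{k}}{\sqrt{t}}\lam$ after multiplying by the center factor; the center variables then range over a simplex of volume $(\lam^{-2}t-\sum_j\tau_j)^k/k!$, which differs from $(\lam^{-2}t)^k/k!$ only by a coupling of relative size $O(\lam^2)$ because the $\psi_j$ confine $\sum_j\tau_j$ to an $O(1)$ scale. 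Multiplying by the prefactor $\lam^{2k}$ converts $(\lam^{-2}t)^k/k!$ into $t^k/k!$ and produces the main term $\frac{t^k}{k!}\prod_{j=1}^k\int_0^\infty\psi_j(\tau)\,d\tau$.

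Finally I would collect the two genuinely $O(\lam)$ contributions — the non-consecutive pairings from the second step and the tail error of the consecutive one — together with the subleading $O(\lam^2)$ coupling correction, and verify they combine into a bound of the stated form $\frac{Ak\br{C_1 t}^k}{\sqrt{t}}\lam$ with $C_1\sim 32C$. I expect the main obstacle to be this third step: performing the difference--center change of variables on the ordered simplex so that the center variables decouple cleanly into a $k$-simplex while the induced coupling is rigorously shown to be $O(\lam^2)$, and simultaneously keeping the combinatorial and factorial factors controlled so that the accumulated error matches Eq.~\eqref{eq:EstimatedRes} rather than growing uncontrollably with $k$.
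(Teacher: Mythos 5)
Your overall architecture matches the paper's: quasi-free expansion, isolation of the unique consecutive pairing via Lemma~\ref{le:ZeroFD}, a difference/centre change of variables, and extension of the $\tau$-integrals to $[0,\infty)$. However, there is one step whose justification is genuinely wrong. You dismiss the discrepancy between the simplex volume $\br{\lam^{-2}t-\sum_j\tau_j}^k/k!$ and $\br{\lam^{-2}t}^k/k!$ as a ``subleading $O(\lam^2)$ coupling correction'' on the grounds that the $\psi_j$ confine $\sum_j\tau_j$ to an $O(1)$ scale. They do not: Lemma~\ref{le:fugacityIneq} only gives $\abs{\psi_j(\tau)}\leq C\chi(\tau)=C(1+\abs{\tau})^{-3/2}$, whose first moment diverges, so $\lam^2\int_0^{\lam^{-2}t}\tau\,\chi(\tau)\,d\tau\sim\lam\sqrt{t}$ is of order $\lam$, not $\lam^2$ (and for $h=h_2$ the $\tau^{-3/2}$ decay is essentially sharp, so this cannot be improved). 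This term is therefore a first-order contribution to $R_k(\lam)$, on the same footing as the non-consecutive pairings, and controlling it requires the weighted-moment estimates of Lemma~\ref{le:EstimateIntegral} — this is exactly what the paper's binomial expansion of $\br{\lam^{-2}t-\sum_j\tau_j}^k$ together with the bounds on $G_{k,p}(\lam)$ accomplishes. As written, your argument for this step would fail; the conclusion survives only because the true order of the term happens to coincide with the target bound.

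A second, smaller omission: the quasi-free expansion \eqref{eq:quasi-free} carries a factor $\mathop{\rm sgn}(\sigma)$, so to obtain the main term of \eqref{eq:MonomialLimitFormula} with a plus sign you must verify that the permutation $\sigma$ associated with the surviving consecutive pairing satisfies $\mathop{\rm sgn}(\sigma)=1$. You note that the Fermi sign is ``the one carried by $\sigma$'' but never compute it; the paper proves $\mathop{\rm sgn}(\sigma)=1$ by induction over the four cases $S_{\pm}\br{S_{\pm}\br{\pi'}}$ of the recursive construction of $S'(2k)$, and this step should not be skipped, since a residual sign would change the limiting generator. The rest of your error bookkeeping (the count of non-consecutive pairings and the tail of the $\tau$-integration, suitably adapted from the product domain to the simplex) is sound.
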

\begin{proof}
First, note that $\mathop{sgn} \br{\sigma} = 1$. Then the proof can be done by induction on $k$. Consider the case $k = 1$. $S' \br{2} = \left\{ \br{1,2}, \br{2, 1} \right\}$ and $\sigma = \br{1, 2}$, $\mathop{sgn} \br{\sigma}=1$. Consider $\pi \in S' \br{2 \br{k+1} } $. There are four possible cases:
\begin{eqnarray*}
    && 1)~ \pi = S_{+} \br{S_{+} \br{\pi'} },\qquad 2)~ \pi = S_{+} \br{S_{-} \br{\pi'} }, \\
    && 3)~ \pi = S_{-} \br{S_{+} \br{\pi'} },\qquad 4)~ \pi = S_{-} \br{S_{-} \br{\pi'} }.
\end{eqnarray*}
Let $\sigma'$ correspond to $\pi'$ and $\sigma$ correspond to $\pi$. By definition for the above cases we have
\begin{eqnarray*}
    && 1)~ \sigma = \br{ 1, 2, \sigma' \br{1}+2, \dots, \sigma' \br{2k}+2 },\qquad 2)~ \sigma = \br{1, 2k+2, \sigma' \br{1}+1, \dots, \sigma' \br{2k}+1}, \\
    && 3)~ \sigma = \br{1, 2k+2, \sigma' \br{1}+1, \dots, \sigma' \br{2k}+1},\qquad 4)~ \sigma = \br{ 2k+1, 2k+2, \sigma' \br{1}, \dots, \sigma' \br{2k} }.
\end{eqnarray*}
We obtain that $\mathop{sgn} \br{\sigma} = \mathop{sgn} \br{\sigma'} $ and therefore $\mathop{sgn} \br{\sigma} = 1$.

According to Lemma \ref{le:ZeroFD} only one partitioning in \eqref{eq:quasi-free} for \eqref{eq:MonomialLimitFormula} is non-zero in the limit. By definition, this  partitioning is determined by the permutation $\sigma$. Other $k!-1$ summands can be estimated using lemma \ref{le:ZeroFD}. As a result we obtain
\begin{eqnarray*}
&& \lam^{2k} \int\limits_{\D_{2k} \br{\lam^{-2} t} } d t_1 \dots d t_{2k} \om \br{ P \br{t_1, t_2, \dots, t_{2k}}  }  \\
&& = \lam^{2k} \int\limits_{\D_{2k} \br{\lam^{-2}t} } dt_1 \dots dt_{2k} \prod_{j=1}^k \om \big( a^{\#} ( e^{ -i t_{\pi \br{ \sigma \br{2j-1} }} h } \phi_{\sigma \br{2j-1}}   )  a^{\#} ( e^{ -i t_{\pi \br{ \sigma \br{2j} }} h } \phi_{\sigma \br{2j}}   )   \big)  + R_{k1} \br{\lam}
\end{eqnarray*}
and 
\begin{equation} \label{eq:UpperBoundFirstTerm}
    \abs{R_{k1} \br{\lam}} \leq   k \frac{ \br{16 C t}^k  }{  \sqrt{t} } \lam ~.
\end{equation}

Consider change of the variable $\tau_j = t_{2j-1} - t_{2j} $, $s_j = t_{2j}$. Absolute value  of the Jacobian   for this change of the variable is $1$. Then we have
\begin{eqnarray*}
&& \lam^{2k} \int\limits_{\D_{2k} \br{\lam^{-2} t} } d t_1 \dots d t_{2k} \om \br{ P \br{t_1, t_2, \dots, t_{2k}}  }  \\ 
&& = \lam^{2k} \int\limits_{ S_k \br{\lam^{-2} t} } d \tau_1 \dots d \tau_k \int\limits_0^{ \lam^{-2} t - \br{ \tau_1 + \dots + \tau_k }  } d s_k  \dots \int\limits_{s_2 + \tau_2}^{ \lam^{-2}t - \tau_1  } d s_1 \psi_1 \br{\tau_1} \dots \psi_k \br{\tau_k} +  R_{k1} \br{\lam} ~.
\end{eqnarray*}
Note that 
\[
    \int\limits_0^{ \lam^{-2} t - \br{ \tau_1 + \dots + \tau_k }  } d s_k \int\limits_{ s_k + \tau_k }^{ \lam^{-2}t - \br{\tau_1 + \dots + \tau_{k-1}}  } d s_{k-1} \dots \int\limits_{s_2 + \tau_2}^{ \lam^{-2}t - \tau_1  } d s_1 = \frac{ \br{ \lam^{-2}t - \br{\tau_1 + \dots + \tau_k}  }^k   }{k!} ~.
\]
Indeed, making change of variable $ s'_k =  s_k $, $s'_{k-1} = s_{k-1}-\tau_k,$ $\dots ,$  $s'_1 = s_1 - \br{\tau_1 + \dots + \tau_k}$ we get
\begin{eqnarray*}
&& \int\limits_0^{ \lam^{-2} t - \br{ \tau_1 + \dots + \tau_k }  } d s_k \int\limits_{ s_k + \tau_k }^{ \lam^{-2}t - \br{\tau_1 + \dots + \tau_{k-1}}  } d s_{k-1} \dots \int\limits_{s_2 + \tau_2}^{ \lam^{-2}t - \tau_1  } d s_1  \\
&&  = \int\limits_0^{ \lam^{-2} t - \br{\tau_1 + \dots + \tau_k} } d s'_k \int\limits_{s'_k}^{ \lam^{-2} t - \br{\tau_1 + \dots + \tau_k} } d s'_{k-1} \dots \int\limits_{s'_2}^{\lam^{-2} t - \br{\tau_1 + \dots + \tau_k}} d s'_1 = \frac{ \br{\lam^{-2} t - \br{\tau_1 + \dots + \tau_k}}^k   }{k!} ~.
\end{eqnarray*}
We obtain
\begin{eqnarray*}
&& \lam^{2k} \int\limits_{\D_{2k} \br{\lam^{-2} t} } d t_1 \dots d t_{2k} \om \br{ P \br{t_1, t_2, \dots, t_{2k}}  }  \\
 &&   = \frac{\lam^{2k}}{k!} \int\limits_{S_k \br{\lam^{-2} t}} d \tau_1 \dots d \tau_k \psi_1 \br{\tau_1} \dots \psi_k \br{\tau_k} \br{ \lam^{-2} t - \br{\tau_1 + \dots \tau_k}  }^k + R_{k1} \br{\lam} ~.
\end{eqnarray*}
Now applying the binomial expansion we get
\begin{eqnarray*}
&& \frac{\lam^{2k}}{k!} \int\limits_{S_k \br{\lam^{-2} t}} d \tau_1 \dots d \tau_k \psi_1 \br{\tau_1} \dots \psi_k \br{\tau_k} \br{ \lam^{-2} t - \br{\tau_1 + \dots \tau_k}  }^k  \\
&& = \frac{t^k}{k!} \sum_{p=1}^k \binom{k}{p} \br{-1}^p  t^{-p} \lam^{2p} \int\limits_{S_k \br{\lam^{-2} t}} d \tau_1 \dots d \tau_k \psi_1 \br{\tau_1} \dots \psi_k \br{\tau_k} \br{\tau_1 + \dots + \tau_k}^p  \\
&&\hspace{.4cm}+\frac{t^k}{k!} \int\limits_{S_k \br{\lam^{-2} t}} d \tau_1 \dots d \tau_k \psi_1 \br{\tau_1} \dots \psi_k \br{\tau_k} ~.
\end{eqnarray*}
The last summand is in general non-zero in the limit. For estimating the sum for $p\geq 1$, combining Lemmas \ref{le:EstimateIntegral}, \ref{le:fugacityIneq} gives
\begin{eqnarray}  \label{eq:UpperBoundSecondTerm}
  \abs{R_{k2} \br{\lam} } &= &  \Big| \frac{t^k}{k!} \sum_{p=1}^k \binom{k}{p} \br{-1}^p  t^{-p} \lam^{2p} \int\limits_{S_k \br{\lam^{-2} t}} d \tau_1 \dots d \tau_k \psi_1 \br{\tau_1} \dots \psi_k \br{\tau_k} \br{\tau_1 + \dots + \tau_k}^p \Big| \nonumber \\ 
  &\leq& \frac{ \br{Ct}^k}{k!} \sum_{p=1}^k \binom{k}{p} t^{-p}  G_{k, p} \br{\lam} \leq \frac{ \br{2C t}^k }{ \br{k-1}! \sqrt{t} } \sum_{p=1}^k \binom{k}{p} 2^p \lam \leq \frac{ \br{6C t}^k }{ \br{k-1}! \sqrt{t} } \lam ~.
\end{eqnarray}

Taking the limit $\lam \downarrow 0$ we get
\[
    \int\limits_{S_k \br{\lam^{-2} t} } d \tau_1 \dots d \tau_k \psi_{1} \br{\tau_1} \dots \psi_{k} \br{\tau_k} = \prod_{j=1}^k \int\limits_0^{+\infty} \psi_j \br{\tau} d \tau + R_{k3} \br{\lam} ~.
\]
Each function $\psi_j \br{\tau}$ is continious and using Lemma \ref{le:fugacityIneq} and Cauchy criterion it is easy to see that the integral converges uniformly for $\lam \in \br{0, 1}$. We  estimate $R_{k3} \br{\lam} $ using Lagrange theorem. For this, introduce the functions
\[
    \Psi_n \br{y} = \int\limits_{S_n \br{y} } d \tau_1 \dots d \tau_n \psi_1 \br{ \tau_1 } \dots \psi_n \br{\tau_n},\quad 1 \leq n \leq k ~.
\]
Then let us prove that there exists a constant $c$ such that 
\begin{equation} \label{eq:UpperBoundDiff}
    \Big| \frac{ d \Psi_n \br{y} }{ d y } \Big| \leq \frac{c^n}{ \br{1+y}^\frac{3}{2} }, \quad y > 0 ~.
\end{equation}
Inequalities \eqref{eq:UpperBoundDiff} are proven by induction. For $n=1$
\[
    \frac{ d \Psi_n \br{y} }{ d y } = \psi_{1} \br{ y} ~.
\]
Using Lemma \ref{le:fugacityIneq}, we obtain that the base of the induction holds for any $c>C$. Consider $n > 1$
\[
    \Psi_n \br{y} = \int\limits_{0}^y \psi_n \br{ \tau_n}  \Psi_{n-1} \br{y - \tau_n} d \tau_n ~.
\]
Combining Leibniz integral rule and obvious identity $\Psi_{n-1} \br{0} = 0$, we obtain that
\[
  \frac{ d \Psi_{n} }{dy} \br{y} = \int\limits_0^y \psi_n \br{\tau} \frac{d \Psi_{n-1}}{d y} \br{y-\tau} d \tau.
\]
By the induction assumption and  Lemma \ref{le:fugacityIneq} one gets 
\[
    \Big| \frac{ d \Psi_{n} }{ dy }  \Big| \leq  c^{n-1} C \int\limits_0^y \frac{ 1   }{\br{1+\tau}^{\frac{3}{2}} } \frac{ 1   }{\br{1+ y - \tau}^{\frac{3}{2}} } d \tau = c^{n-1} C \frac{ 4 y }{ \sqrt{1+y} \br{y+2}^2 } \leq \frac{c^n}{ \br{1+y}^\frac{3}{2} }
\]
and the inequality is satisfied for any sufficiently large $c$. Note that the constant $c$ can be chosen uniformly for all possible $\phi_j$.

Using Lagrange theorem, we obtain
\begin{equation} \label{eq:UpperBoundThirdTerm}
    \abs{ R_{k3} \br{\lam} } \leq \Big| \frac{\Psi_{k} \br{\lam^{-2} t} }{ d \lam}  \Big| \lam  \leq \frac{2 c^k}{\sqrt{t}} \lam, \quad \lam \in \br{0, 1} ~.
\end{equation}

Finally, it gives
\[
        \lam^{2k} \int\limits_{\D_{2k} \br{\lam^{-2} t} } d t_1 \dots d t_{2k} \om \br{ P \br{t_1, t_2, \dots, t_{2k}}  } = \frac{t^k}{k!} \prod_{j=1}^k \int\limits_0^{\infty} \psi_j \br{t} dt  + R_{k1} \br{\lam} + R_{k2} \br{\lam} + R_{k3} \br{\lam} ~.
\]
Summing \eqref{eq:UpperBoundFirstTerm}, \eqref{eq:UpperBoundSecondTerm}, and \eqref{eq:UpperBoundThirdTerm}, one gets the estimate
\[
    \abs{ R_k \br{\lam} } = \abs{ R_{k1} \br{\lam} + R_{k2} \br{\lam} + R_{k3} \br{\lam} } \leq \frac{ Ak \br{C_1 t}^k }{\sqrt{t}} \lam, \quad \lam \in \br{0, 1} ~,
\]
where constants $A, C_1$ can be set uniformly of choosing $\phi_j$.
\end{proof}

\begin{theorem} \label{th:FD-RecurRelation}
Assume that Conditions \ref{cond:Formfactors} and  \ref{cond:InitState} hold. Define
\begin{eqnarray*}
    a_{ij} & =& \int\limits_0^{\infty} \om \br{ F_i \br{t} \times F_j }  dt ~, \\
    b_{ij} & =& \int\limits_0^{\infty} \om \br{ F_j \times F_i \br{t} } dt ~.
\end{eqnarray*}
One has the following.
\begin{itemize}
\item If $\pi \in S' \br{2k+1}$ then for any multi-index $1_{2k+1} \leq \gamma \leq \nu 1_{2k+1}$ and $\lam \in \br{0, 1}$ one has
    \[
    \Om_{\gamma, \pi} \br{\lam} = 0 ~.
    \]
\item For $\pi \in S' \br{2k}$ one has
    \[
    \Om_{ \gamma, \pi  } \br{\lam} = U \br{\gamma, \pi} + R_{\gamma, \pi} \br{\lam} ~.
    \]  
Here term $R_{\gamma, \pi}$ satisfies
    \[
    \abs{ R_{\gamma, \pi} \br{\lam}  } \leq  \frac{ k A  \br{ C' t}^k }{ \sqrt{t} } \lam
    \]
with some constants $A, C'$ and $\lam \in \br{0, 1}$. Term $U \br{\gamma, \pi}$ satisfies the following recurrence relation:
\begin{itemize}
    \item For $k=1$,
    \begin{eqnarray*}
         U \br{ \br{i, j}, \br{1, 2} } &=& t a_{ij} ~, \\
        U \br{ \br{i, j}, \br{2, 1} } &=& t b_{ij} ~.
    \end{eqnarray*}
    \item For $\pi \in S \br{2 \br{k+1} } $ if either $\pi = S_{+} \br{S_{+} \br{\pi'}  }  $ or $\pi = S_{+} \br{S_{-} \br{\pi'}}$ then
   \[
    U \br{ \br{i, j, \gamma'}, \pi   } = \frac{ t a_{ij }}{ k+1 }  U \br{\gamma', \pi'} ~,   
    \]
    else
    \[
        U \br{ \br{i, j, \gamma'}, \pi   } = \frac{ t b_{ij }}{ k+1 }  U \br{\gamma', \pi'} ~.
    \]
\end{itemize}
\end{itemize}
\end{theorem}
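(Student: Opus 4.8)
The plan is to feed the quasi-free structure of the state into the decay estimates of Lemmas \ref{le:ZeroFD}, \ref{le:FD-Estimate} and \ref{le:FD-Summ}, and then to read off the recurrence for $U\br{\gamma,\pi}$ by induction on $k$. The odd case is immediate: each factor $F_{\gamma_{\pi\br{l}}}\br{t_{\pi\br{l}}}$ is a linear combination of single creation and annihilation operators, so by multilinearity $\om\br{F_{\gamma,\pi}\br{t_1,\dots,t_{2k+1}}}$ is a finite sum of expectations of products of an odd number of creation/annihilation operators, each of which vanishes by Condition \ref{cond:InitState}. Hence the integrand in \eqref{eq:DysonSeriesReservoirIntegral} is identically zero and $\Om_{\gamma,\pi}\br{\lam}=0$.

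For $\pi\in S'\br{2k}$ I would first expand every $F$ into its creation/annihilation and polarization components, turning $\om\br{F_{\gamma,\pi}}$ into a finite sum of monomial expectations, and then apply the quasi-free formula \eqref{eq:quasi-free} to each monomial. Gauge invariance (Condition \ref{cond:InitState}) leaves only pairings joining one creation with one annihilation operator, and each such contraction produces a product of two-point functions of exactly the form $\Phi\br{\lam}$ studied in Lemma \ref{le:ZeroFD}. By that lemma every contraction that pairs two factors whose time indices differ by more than one is $O\br{\lam}$ and is collected into $R_{\gamma,\pi}$; the only contraction surviving in the limit is the nearest-neighbour one, encoded by the permutation $\sigma$ of Lemma \ref{le:FD-Summ} with $\mathop{sgn}\br{\sigma}=1$. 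Applying \eqref{eq:MonomialLimitFormula} to this contraction identifies the leading part as $U\br{\gamma,\pi}=\tfrac{t^k}{k!}\prod_{j=1}^k\int_0^\infty\psi_j\br{\tau}\,d\tau$, and summing the bounds \eqref{eq:UpperBoundFirstTerm}, \eqref{eq:UpperBoundSecondTerm}, \eqref{eq:UpperBoundThirdTerm} over the finitely many contractions and form-factor choices yields $\abs{R_{\gamma,\pi}\br{\lam}}\le kA\br{C't}^k t^{-1/2}\lam$ with $A,C'$ uniform in $\gamma,\pi$.

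The recurrence for $U$ I would prove by induction on $k$. For $k=1$ a direct evaluation gives the base: for $\pi=\br{1,2}$ the surviving two-point function is $\om\br{F_i\br{t_1}F_j\br{t_2}}$ with $t_1>t_2$, whose limiting integral is $a_{ij}$, so $U\br{\br{i,j},\br{1,2}}=ta_{ij}$; reversing the order for $\pi=\br{2,1}$ produces $b_{ij}$. For the step, write $\pi\in S'\br{2\br{k+1}}$ as $S_{\pm}\br{S_{\pm}\br{\pi'}}$ with $\gamma=\br{i,j,\gamma'}$. Since position $l$ carries the time index $\pi\br{l}$ together with the subscript $\gamma_{\pi\br{l}}$, the value $1$ always attaches the factor $F_i\br{t_1}$ (the largest time) and the value $2$ the factor $F_j\br{t_2}$; the pairing $\sigma$ joins these two values, while the remaining $2k$ factors reproduce precisely the surviving pairing of $\pi'$ with the same form-factors, their two-point integrals being translation invariant by \eqref{eq:InvariantFreeEvState}. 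The outermost $S$ then decides the relative order of the new pair: $S_{+}$ inserts value $1$ at the front, so $F_i\br{t_1}$ precedes $F_j\br{t_2}$ and the pair integrates to $a_{ij}$, whereas $S_{-}$ inserts it at the back, so $F_j\br{t_2}$ comes first and the pair integrates to $b_{ij}$. Comparing the prefactors $\tfrac{t^{k+1}}{\br{k+1}!}$ and $\tfrac{t^k}{k!}$ supplies the factor $\tfrac{t}{k+1}$, and the product of the remaining integrals equals $U\br{\gamma',\pi'}$, giving exactly the stated relation.

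The main obstacle is the inductive step: one must check that under each of the four compositions $S_{\pm}\circ S_{\pm}$ the permutation $\sigma$ factorizes so that the new pair $\br{i,j}$ decouples cleanly and the residual pairing coincides with that of $\pi'$, matching form-factors and preserving $\mathop{sgn}\br{\sigma}=1$, so that the limit genuinely splits as $a_{ij}$ or $b_{ij}$ times $U\br{\gamma',\pi'}$. Once Lemmas \ref{le:ZeroFD}--\ref{le:FD-Summ} are in place the Wick expansion and the decay estimates are largely bookkeeping; the delicate point is the purely combinatorial identification of the outermost $S_{\pm}$ with the dichotomy between $a_{ij}$ and $b_{ij}$.
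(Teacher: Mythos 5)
Your proposal is correct and follows essentially the same route as the paper: the odd case via the quasi-free property, and the even case by expanding $F_{\gamma,\pi}$ into monomials, invoking Lemma \ref{le:FD-Summ} (which already packages the Wick expansion, the elimination of non-nearest-neighbour pairings via Lemma \ref{le:ZeroFD}, and formula \eqref{eq:MonomialLimitFormula}), and then running the induction on $k$ with the outermost $S_{\pm}$ deciding between $a_{ij}$ and $b_{ij}$. Your combinatorial identification of how the outer $S_{\pm}$ places $F_i(t_1)$ before or after $F_j(t_2)$ is exactly the content the paper compresses into ``other cases can be considered similarly.''
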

\begin{proof}
The statement for $\pi \in S' \br{2k+1}$  follows directly from that state $\om$ is quasi-free.  

The statement for $\pi \in S' \br{2k}$ can be proved by induction on $k$.  Consider $\pi \in S' \br{2}$. For the permutation $\pi = \br{1, 2} $ we have
\begin{eqnarray*}
    \Om_{ \br{i, j}, \br{1, 2} } \br{\lam} & = & \lam^2 \int\limits_{\D_2 \br{\lam^{-2} t} } \om \br{ F_{i} \br{t_1} F_j \br{t_2} }  dt =   \lam^2 \int\limits_{\D_2 \br{\lam^{-2} t} } \om \br{ F_i \br{ t_1 - t_2 } F_j  } dt \,\left[ \tau := t_1 - t_2 \right] \\ 
     & = &  \int_0^{\lam^{-2} t} \br{ t - \lam^2 \tau }  \om \br{ F_i \br{ \tau } F_j } d \tau = t a_{ij} - \lam^2 \int\limits_0^{\lam^{-2} t} \tau \om \br{ F_i \br{ \tau } F_j } d \tau ~.
\end{eqnarray*}
Taking into account Lemmas \ref{le:fugacityIneq} and \ref{le:EstimateIntegral} we obtain the base case.

Consider $\pi \in S' \br{2k}$, $1_{2k} \leq \gamma \leq \nu 1_{2k} $, $\gamma = \br{i, j, \gamma'}$. Let $\pi = S_{+} \br{ S_{+} \br{\pi'} }$; other cases can be considered similarly. By definition, we have
\[
    F_{\gamma, \pi } \br{t_1, \dots, t_{2k} }  = F_{ i } \br{t_1} F_j \br{t_2} F_{\gamma', \pi'} \br{t_3, \dots, t_{2k}} ~.
\]
Here $F_{\gamma, \pi }$ consists of $4^{2k}$ terms, where each term is a monomial of creation and annihilation operators. In each monomial we take the limit $\lam \downarrow 0$ using Lemma \ref{le:FD-Summ}. Taking into account \eqref{eq:MonomialLimitFormula} we obtain
\[
    U \br{ \br{i, j, \gamma'}, \pi  } = \frac{ t }{k} \bigg( \int\limits_0^{\infty} \om \br{ F_i \br{t} F_j } dt \bigg) U \br{ \gamma' , \pi' } + R_{ \gamma, \pi } \br{\lam} ~,  
\]
where $R_{ \gamma, \pi } \br{\lam}$ can be estimated using \eqref{eq:EstimatedRes} as
\[
\abs{R_{ \gamma, \pi } \br{\lam}} \leq 4^{2k} \frac{ Ak \br{C_1 t}^k }{\sqrt{t}} \lam ~.
\]
\end{proof}

Expressions for $a_{ij}$ and $b_{ij}$ can be written in another form by applying stationary phase method. For one-particle Hamiltonian $h = h_q$, $q = 1, 2$ we have 
\[
\int_0^{+\infty} \om \big( a^{\dagger} ( e^{ \pm i t h_q} f ) a ( g )   \big) dt = \lim_{T \to \infty} \int\limits_0^T dt \int\limits_{\RR^3} d  k  f  ( k ) \overline{ g (k ) } \rho ( |k|^q ) e^{ \pm i t |k|^q } ~.
\]
With Fubini's theorem 
\[
\int\limits_0^T dt \int\limits_{\RR^3} d  k f (k) \overline{ g (k) } \rho ( |k|^q ) e^{ \pm i t |k|^q } = \int\limits_{\RR^3}  dk  f (k) \overline{ g (k) } \rho ( |k|^q) \frac{ e^{\pm i T |k|^q } -1 }{ \pm i |k|^q } ~.
\]
Let us switch to spherical coordinates
\[
    \left\{ \begin{aligned}
     k_1 &= r \sin \br{ \theta } \cos \br{ \phi }, \\
     k_2 &= r \sin \br{ \theta } \sin \br{ \phi }, \\
     k_3 &= r \cos \br{ \theta }.
    \end{aligned} \right.
\]
Denote
\[
    C_q \br{r, T} = \br{ \mp i } \int\limits_0^{\pi} d \theta \int \limits_0^{2 \pi} d \phi  f \br{r, \theta, \phi} \overline{  g \br{r, \theta, \phi} } \rho (r^q )  \sin \br{\theta} r^{2-q} \big( e^{\pm i T r^q} - 1 \big) ~.
\]
Applying stationary phase method, one gets
\begin{eqnarray*}
\int_0^{+\infty} \om \big( a^{\dagger} ( e^{ \pm i t h_q} f ) a ( g )   \big) dt &=&  \lim_{T \to \infty} \int\limits_0^{+\infty} C_q \br{r, T} d r  \\ 
 &=&  \br{\pm  i} \int\limits_0^{+\infty} d r \int\limits_0^{\pi} d \theta \int \limits_0^{2 \pi} d \phi  f \br{r, \theta, \phi} \overline{  {g} \br{r, \theta, \phi} } \rho \br{ r^q } \sin \br{\theta} r^{2-q}  \\ 
 &=&  \pm i \int\limits_{\RR^3} \frac{ {f} \br{ k} \overline{  {g} \br{ k} } \rho \br{\abs{ k}^q}}{ \abs{ k}^q } dk ~.
\end{eqnarray*}

Similarly, using the anti-commutation relation for Fermi creation an annihilation operators
\[
    \int\limits_0^{\infty} \om \big( a (g)  a^{\dagger} ( e^{ \pm i t h_q} f )   \big) dt = \pm i \int\limits_{\RR^3} \frac{ {f} \br{ k} \overline{ g (k) }  \big( 1- \rho ( |k|^q ) \big) }{ |k|^q } dk ~.
\]
For $h = h_q$, $q= 1, 2$ and a Fermi reservoir define
\begin{equation} \label{eq:ConstantsFormFactors}
    \alpha_{ij} := - \sum_{p=1}^2 \int\limits_{\RR^3} \frac{   f_{ip} (k) \overline{ g_{jp} (k) } \rho ( |k|^q ) +  f_{jp} (k) \overline{  g_{ip} (k) } \big( \rho ( |k|^q )  - 1 \big)  }{ |k|^q } dk ~.
\end{equation}
For a Bose reservoir define
\begin{equation} \label{eq:ConstantsFormFactorsBoze}
    \alpha_{ij} := \sum_{p=1}^2 \int\limits_{\RR^3} \frac{ -f_{ip} (k) \overline{  g_{jp} (k)  } \rho ( |k|^q  ) +  f_{jp} (k) \overline{ g_{ip} (k)  } ( 1 + \rho ( |k|^q ) )   }{ |k|^q } dk ~.
\end{equation}

For both cases we have
\begin{equation} \label{eq:FugacityCoeffs}
\left\{ \begin{aligned} 
    & a_{ij} = i \alpha_{ij} ~, \\
    & b_{ij} = - a_{ji} ~.
\end{aligned}    \right.
\end{equation}

So, we obtain that for all permutations $\pi \in S' \br{2k}$ only one among $n!$ summands in \eqref{eq:DysonSeriesReservoirIntegral} is nonzero in the limit. In terms of Friedrichs diagrams, this summand corresponds to the diagram drawn by connecting consecutive times $\br{t_1, t_2}$, $\br{t_3, t_4}$ and soon on. This limit is exactly computed and expressed as a product of quadratic forms of the formfactors.

\section{Reduced System  Dynamic in the Weak Coupling Limit}\label{Sec4:Reduced}
In this section we derive the system reduced dynamic \eqref{eq:ReducedDynamic} in the WCL. The following series represents the reduced system dynamics and is induced by the Dyson series \eqref{eq:DysonSeriesReducedDynamic}
\begin{equation} \label{eq:TempSeries}
    \sum_{k=1}^{\infty} i^k  \sum_{ 1_k \leq \gamma \leq \nu 1_{k} } \sum_{ \pi \in S' \br{k} } Q_{\gamma, \pi} \left[ X \right] \lam^k \int\limits_{\D_k \br{\lam^{-2} t} } dt_1 \dots dt_k  \om \br{  F_{\gamma, \pi} \br{t_1, \dots, t_k} } ~.
\end{equation}
\begin{lemma} \label{le:UniformlyConver}
    Assume that Conditions \ref{cond:Formfactors}, \ref{cond:Interaction} and \ref{cond:InitState} hold. Consider operator $X = \br{\cdot, \phi} \psi$ with $\phi$, $\psi$ belonging to the set $\DD$ used in Condition \ref{cond:Interaction}. Then there exists $t_0 > 0$ such that for all $t \in \left[0, t_0 \right)$ series \eqref{eq:TempSeries} absolutely and uniformly converges for $\lam \in \br{0, 1}$.
\end{lemma}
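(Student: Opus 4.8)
The plan is to prove absolute and uniform convergence by the Weierstrass M-test, bounding the order-$k$ term of \eqref{eq:TempSeries} in operator norm by a quantity independent of $\lam\in(0,1)$ and summable in $k$. Since $X=(\cdot,\phi)\psi$ is fixed, the order-$k$ term is dominated by $\sum_{1_k\le\gamma\le\nu 1_k}\sum_{\pi\in S'(k)}\norm{Q_{\gamma,\pi}[X]}\,\abs{\Om_{\gamma,\pi}(\lam)}$, so it suffices to bound the operator norm and the correlation function separately and then count the summands. By the first bullet of Theorem~\ref{th:FD-RecurRelation} every term of odd order vanishes, so only even orders $k=2m$ contribute.

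First I would bound $\norm{Q_{\gamma,\pi}[X]}$ using the recursive definition of $Q_{\gamma,\pi}$ together with Conditions~\ref{cond:PermutForHermite} and \ref{cond:Interaction}. The base case $Q_{\gamma,\text{id}}[X]=[Q_\gamma,X]$ produces exactly two rank-one operators, $(\cdot,\phi)Q_\gamma\psi$ and $(\cdot,Q_{\sigma(\gamma)}\phi)\psi$, using $Q_\gamma^{\dagger}=Q_{\sigma(\gamma)}$; each subsequent step multiplies by a single $Q_{\gamma_1}$ on the left or on the right, sending $(\cdot,\alpha)\beta$ to $(\cdot,\alpha)Q_{\gamma_1}\beta$ or to $(\cdot,Q_{\sigma(\gamma_1)}\alpha)\beta$ and so preserving rank one. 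Hence $Q_{\gamma,\pi}[X]$ is always a sum of two rank-one operators obtained by distributing $k$ factors $Q_j$ between $\phi$ and $\psi$, with all intermediate vectors remaining in $\DD$ by part~1 of Condition~\ref{cond:Interaction}. Part~2 of that condition then yields $\norm{Q_{\gamma,\pi}[X]}\le 2c^{k}\norm{\phi}\norm{\psi}$ with $c=\max(c_\phi,c_\psi)$.

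For the correlation factor I would use the second bullet of Theorem~\ref{th:FD-RecurRelation}, writing $\Om_{\gamma,\pi}(\lam)=U(\gamma,\pi)+R_{\gamma,\pi}(\lam)$ for $k=2m$. Iterating the recurrence for $U$ and bounding $\abs{a_{ij}},\abs{b_{ij}}\le M$ gives $\abs{U(\gamma,\pi)}\le (Mt)^m/m!$, while the remainder estimate gives $\abs{R_{\gamma,\pi}(\lam)}\le mA(C't)^m/\sqrt{t}$ for all $\lam\in(0,1)$ after discarding the factor $\lam\le1$. Combining these with $\abs{S'(2m)}=2^{2m-1}$ and the $\nu^{2m}$ multi-indices, the order-$2m$ contribution to the absolute sum is at most
\[
C_X\,(4\nu^2c^2)^m\left(\frac{(Mt)^m}{m!}+\frac{mA(C't)^m}{\sqrt{t}}\right),\qquad C_X=\norm{\phi}\norm{\psi}.
\]
With $K=4\nu^2c^2$, the first piece sums to $C_X(e^{KMt}-1)$ for every $t$, whereas the second has the form $(AC_X/\sqrt{t})\sum_m m(KC't)^m$, which converges exactly when $KC't<1$. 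The restriction to small times therefore comes entirely from the remainder estimate: the limiting part $U$ converges on all of $[0,\infty)$, and the finite radius of convergence is the one genuine obstacle, forced by $R_{\gamma,\pi}$. Setting $t_0=(KC')^{-1}$ and applying the Weierstrass M-test then gives absolute and uniform convergence of \eqref{eq:TempSeries} on $[0,t_0)$ for $\lam\in(0,1)$.
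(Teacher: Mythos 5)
Your proof is correct, but it takes a different route from the paper's at the key step of bounding the reservoir factor. The paper does not invoke the decomposition $\Om_{\gamma,\pi}=U+R$ from Theorem~\ref{th:FD-RecurRelation} at all here; instead it bounds $\lam^{2n}\int_{\D_{2n}(\lam^{-2}t)}\om(F_{\gamma,\pi})$ directly by expanding $F_{\gamma,\pi}$ into $4^{2n}$ monomials, expanding each monomial into $n!$ quasi-free pairings, and applying the crude uniform estimate of Lemma~\ref{le:FD-Estimate} ($\abs{\Phi(\lam)}\le (16Ct)^n/n!$) to every pairing, which yields a purely geometric bound $(\mathrm{const}\cdot t)^n$ after the $n!$ pairings eat the $1/n!$. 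Your version instead bounds $\abs{U(\gamma,\pi)}\le (Mt)^m/m!$ via the recurrence and $\abs{R_{\gamma,\pi}}$ via the theorem's remainder estimate; this is logically legitimate (Theorem~\ref{th:FD-RecurRelation} precedes this lemma and does not depend on it), and it buys the extra observation that the finite radius of convergence is forced solely by the remainder terms, the limiting series converging for all $t$ --- something the paper's cruder bound cannot see. Your operator-norm bound is also slightly sharper: by tracking the rank-one structure you get $\norm{Q_{\gamma,\pi}[X]}\le 2c^k\norm{\phi}\norm{\psi}$ where the paper settles for $2^kc^k$; neither difference affects the conclusion. One small remark: your computation of $XQ_\gamma=(\cdot,Q_{\sigma(\gamma)}\phi)\psi$ uses $Q_\gamma^{\dagger}=Q_{\sigma(\gamma)}$ from Condition~\ref{cond:PermutForHermite}, which is not listed among the lemma's hypotheses --- but the paper's own norm bound on $Q_{\gamma,\pi}[X]$ needs the same fact (it is used explicitly in the proof of Lemma~\ref{le:DS-converge}), so this is an omission in the statement rather than a flaw in your argument.
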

\begin{proof}
Under Condition \ref{cond:InitState} for all $\pi \in S \br{k} $ and $1_k \leq \gamma \leq \nu 1_k$ one has that
\[
\norm{ Q_{\gamma, \pi } \left[ X \right] } \leq 2^k \br{ \max \left\{ c_{\phi}, c_{\psi} \right\}    }^k \norm{\phi} \norm{\psi} ~,
\]
where $c_{\phi}, c_{\psi}$ are constants from Condition \ref{cond:Interaction}.

If $k$ is odd, then by Theorem \ref{th:FD-RecurRelation} $k$-th term is zero. Denote $2n$-th term in \eqref{eq:TempSeries} as
\[
    s_n \br{\lam}  =  (-1)^n  \sum_{ 1_{2n} \leq \gamma \leq \nu 1_{2n}  } \sum_{ \pi \in S' \br{2n} } Q_{\gamma, \pi}  \left[ X \right] \int\limits_{\D_{2n} \br{\lam^{-2} t} } dt_1 \dots dt_{2n} \om \br{ \lam^{2n} F_{\gamma, \pi }  \br{t_1, \dots, t_{2n}}} ~.
\]
Each $F_{\gamma, \pi }  \br{t_1, \dots, t_{2n}} $ consists of $4^{2n}$ monomials and for each monomial there exist $n!$   partitioning from \eqref{eq:quasi-free}. Using Lemma \ref{le:FD-Estimate} we obtain
\[
    \Big| \lam^{ 2n } \int\limits_{\D_{2n} \br{\lam^{-2} t} } dt_1 \dots dt_{2n} \om \br{ F_{\gamma, \pi }  \br{t_1, \dots, t_{2n}}}  \Big| \leq 4^{2n} n! \frac{ \br{16C t}^n   }{n!} = \br{128 C t }^n ~.
\]
By definition, $s_n \br{\lam} $ consists on $\nu^{2n} \abs{S' \br{2n} } $ terms. We get
\[
    \abs{ s_n \br{\lam}   } \leq  \br{ 2^{11} \nu^2 \br{\max \left\{ c_{\phi}, c_{\psi}  \right\}  }^2 C t   }^n \frac{ \norm{\phi} \norm{\psi} }{2} ~.
\]
The uniform and absolute convergence of the series \eqref{eq:TempSeries} for sufficiently small $t$ follows from the Weierstrass M-test.
\end{proof}

\begin{theorem} \label{th:ReducedDynamic}
Let Conditions \ref{cond:PermutForHermite}, \ref{cond:Formfactors}, \ref{cond:Interaction} and \ref{cond:InitState} hold.
Consider the operator
\[
    H' = \sum_{i=1}^{\nu} \sum_{j=1}^{\nu} \alpha_{ij} Q_i Q_j ~,
\]
defined on $D \br{H'} = \DD$, where the coefficients $\alpha_{ij}$ are defined by \eqref{eq:ConstantsFormFactors} for Fermi case and by \eqref{eq:ConstantsFormFactorsBoze} for Bose case. Assume that the operator $H^S - \lam^2 H'$ with the domain of definition $\DD$ is essentially self-adjoint for all $\lam \in \br{0, 1}$. Denote closure operator
\[
    \wt{H}_{\lam} = \overline{ \lam^{-2} H^S - H'  } ~.
\]
Consider $X \in \UU^S$. Then there exists $t_0 > 0 $ such that for all $t \in \left[0,  t_0 \right) $ the reduced dynamic for the system \eqref{eq:ReducedDynamic} satisfies
\[
    \wt{\Tau}_{\lam} \br{t} X  = e^{ i  \wt{H}_{\lam}  t   }  X e^{ -i \wt{H}_{\lam}  t   }  + O_{\norm{\cdot} } \br{\lam},\quad \lam \downarrow 0.
\]
\end{theorem}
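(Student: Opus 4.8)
The plan is to pass to the limit term by term in the Dyson series for $\wt{\Tau}_\lam\br{t}X$ given in Eq.~\eqref{eq:DysonSeriesReducedDynamic}, isolate the surviving contributions, and recognise their sum as the Taylor expansion of a unitary conjugation. By Theorem~\ref{th:FD-RecurRelation} every odd-order term vanishes identically, while each even-order reservoir integral $\Om_{\gamma,\pi}\br{\lam}$ splits as $U\br{\gamma,\pi}+R_{\gamma,\pi}\br{\lam}$ with $\abs{R_{\gamma,\pi}\br{\lam}}\le kA\br{C't}^k\lam/\sqrt{t}$. Lemma~\ref{le:UniformlyConver} guarantees that for $t<t_0$ the series~\eqref{eq:TempSeries} converges absolutely and uniformly in $\lam\in\br{0,1}$, so I may split it into a $\lam$-independent main part $M\br{t}:=\sum_{k\ge1}i^{2k}\sum_{\gamma,\pi}Q_{\gamma,\pi}\left[X\right]U\br{\gamma,\pi}$ and a remainder series, and treat the two separately. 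First I would reduce to $X=\br{\cdot,\phi}\psi$ with $\phi,\psi\in\DD$ and recover the general case by linearity.

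The heart of the argument is to identify $M\br{t}$. I claim that for each $k$ the $2k$-th main term equals $\frac{\br{-it}^k}{k!}\br{\mathrm{ad}_{H'}}^k X$, where $\mathrm{ad}_{H'}\br{Y}=\comm{H'}{Y}$ and $H'=\sum_{ij}\alpha_{ij}Q_iQ_j$; summing over $k$ then yields $M\br{t}=\sum_{k\ge0}\frac{\br{-it}^k}{k!}\br{\mathrm{ad}_{H'}}^kX=e^{-iH't}Xe^{iH't}$. I would prove the claim by induction on $k$. The base case $k=1$ is a direct computation, using $a_{ij}=i\alpha_{ij}$, $b_{ij}=-i\alpha_{ji}$ together with the identity $\comm{Q_iQ_j}{X}=Q_i\comm{Q_j}{X}+\comm{Q_i}{X}Q_j$. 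For the step, decompose each $\pi\in S'\br{2\br{k+1}}$ uniquely as one of the four compositions $S_{\pm}\br{S_{\pm}\br{\pi'}}$ with $\pi'\in S'\br{2k}$, and write $\gamma=\br{i,j,\gamma'}$; applying the recurrence for $Q_{\gamma,\pi}$ from Sec.~\ref{sec:DS-definition} twice and the recurrence for $U\br{\gamma,\pi}$ from Theorem~\ref{th:FD-RecurRelation}, the four contributions collapse, after summing over $i,j$ and relabelling, to $\frac{it}{k+1}\comm{H'}{Q_{\gamma',\pi'}\left[X\right]}$. Summing over $\br{\gamma',\pi'}$ and restoring the prefactor $i^{2\br{k+1}}$ then converts the inductive hypothesis $\frac{\br{-it}^k}{k!}\br{\mathrm{ad}_{H'}}^kX$ into $\frac{\br{-it}^{k+1}}{\br{k+1}!}\br{\mathrm{ad}_{H'}}^{k+1}X$, closing the induction.

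It remains to reassemble $M\br{t}$ with the outer free evolution and to recognise the unitary group. By Condition~\ref{cond:Interaction}(3) each $e^{\pm iH^St}$ commutes with every $Q_j$ on $\DD$, hence with $H'$ on $\DD$; under the assumed essential self-adjointness of $H^S-\lam^2H'$ on $\DD$ the operator $\wt{H}_\lam=\overline{\lam^{-2}H^S-H'}$ is self-adjoint and generates a unitary group with $e^{i\wt{H}_\lam t}=e^{i\lam^{-2}H^St}e^{-iH't}$ on $\DD$. Consequently $\Tau^S\br{\lam^{-2}t}\left[e^{-iH't}Xe^{iH't}\right]=e^{i\wt{H}_\lam t}Xe^{-i\wt{H}_\lam t}$, so applying $\Tau^S\br{\lam^{-2}t}$ to $X+M\br{t}$ produces exactly the claimed limiting unitary dynamics.

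Finally I would estimate the error. The discarded remainder series is controlled in operator norm using $\abs{R_{\gamma,\pi}\br{\lam}}\le kA\br{C't}^k\lam/\sqrt{t}$, the bound $\norm{Q_{\gamma,\pi}\left[X\right]}\le 2^k\br{\max\left\{c_\phi,c_\psi\right\}}^k\norm{\phi}\norm{\psi}$, and the count $\nu^{2k}\abs{S'\br{2k}}=\nu^{2k}2^{2k-1}$ of terms, giving a bound of the form $\sum_{k\ge1}k\br{C''t}^k$ times $\lam/\sqrt{t}$, which converges for $t<t_0$ and is $O\br{\lam}$ uniformly; this yields the $O_{\norm{\cdot}}\br{\lam}$ remainder. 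I expect the induction of the second paragraph to be the \emph{main obstacle}: the bookkeeping of the signs introduced by $S_{-}$ and by $b_{ij}=-a_{ji}$, together with the $1/\br{k+1}$ weights, must assemble precisely into the nested commutator with the correct $1/k!$. A secondary subtlety, flagged by the paper's own Remark on Nelson's example, is that the factorisation $e^{i\wt{H}_\lam t}=e^{i\lam^{-2}H^St}e^{-iH't}$ is not automatic from weak commutativity and is legitimised here by the essential self-adjointness hypothesis.
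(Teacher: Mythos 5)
Your proposal is correct and follows the paper's strategy almost step for step: the same splitting of the series \eqref{eq:TempSeries} into the $U\br{\gamma,\pi}$ main terms and the $R_{\gamma,\pi}$ remainders via Theorem~\ref{th:FD-RecurRelation}, the same use of Lemma~\ref{le:UniformlyConver} and the same counting to get the $O_{\norm{\cdot}}\br{\lam}$ error, and your closed-form induction giving the $2k$-th main term as $\frac{\br{-it}^k}{k!}\br{\mathrm{ad}_{H'}}^k X$ is exactly equivalent to the paper's recurrence $A_n\left[X\right]=-i\frac{t}{n}\comm{H'}{A_{n-1}\left[X\right]}$, i.e.\ to the series \eqref{eq:DysonSeriesFinal} with $\mathcal{L}=i\,\mathrm{ad}_{-H'}$. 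The only genuine divergence is the last step. The paper never writes $e^{-iH't}$ as a standalone unitary: it identifies the limit series with the (weakly defined) solution of the Heisenberg equation \eqref{eq:InitialProblemRedDyn} generated by $H^S-\lam^2H'$, by solving the interaction-picture Cauchy problem \eqref{eq:InitialProblemRedDynReplacement} and matching series after the rescaling $t\to\lam^{-2}t$; essential self-adjointness then makes that dynamics the conjugation by $e^{i\wt{H}_\lam t}$. You instead sum the main part to $e^{-iH't}Xe^{iH't}$ and assert the factorisation $e^{i\wt{H}_\lam t}=e^{i\lam^{-2}H^St}e^{-iH't}$. That route works, but it needs two supporting facts you leave implicit: first, that $e^{-iH't}$ exists at all, which follows because Condition~\ref{cond:Interaction}(2) makes every vector of $\DD$ an entire (analytic) vector for the symmetric operator $H'$, so $H'$ is essentially self-adjoint on $\DD$ by Nelson's analytic vector theorem; second, that the two groups actually commute, which follows from Condition~\ref{cond:Interaction}(1),(3) applied termwise to the convergent exponential series on $\DD$ (this is where weak commutativity is upgraded, and why Nelson's counterexample is evaded), after which the product group's generator restricted to $\DD$ is $\lam^{-2}H^S-H'$ and the essential self-adjointness hypothesis identifies it with $\wt{H}_\lam$. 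With those two points filled in, your version is a slightly more direct packaging of the same proof; what the paper's Cauchy-problem detour buys is that it exhibits \eqref{eq:InitialProblemRedDyn} explicitly as the limiting master equation, without ever needing $e^{-iH't}$ as a separate object.
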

\begin{proof}
First, we claim that the operator $H'$ is a Hermitian (i.e., symmetric) operator. Indeed, due to Condition \ref{cond:PermutForHermite} one has
\[
\br{ \alpha_{ij} Q_i Q_j }^{\dagger} = \overline{ \alpha_{ij} } Q_j^{\dagger} Q_{i}^{\dagger} = \alpha_{ \sigma \br{j} \sigma \br{i} } Q_{\sigma \br{j}} Q_{\sigma \br{i}} ~.
\]
    
Assume that $t$ is small enough so that  series \eqref{eq:TempSeries} uniformly converges for $\lam \in \br{0, 1}$, and take the term  by term limit. Using Theorem \ref{th:FD-RecurRelation}, we obtain
\[
\wt{\Tau}_{\lam} \br{t} X = \Tau^S \br{ \lam^{-2} t} \br{X + \sum_{n=1}^{\infty}  \br{-1}^n \br{ \sum_{1_{2n} \leq \gamma \leq \nu 1_{2n} } \sum_{ \pi \in S' \br{2n}  }  U \br{\gamma, \pi} Q_{\gamma, \pi} \left[ X \right] } + R_n \br{\lam} \left[ X \right]     } ~,
\]
where $R_n \br{\lam} $ can be estimated as
\[
    \norm{ R_n \br{\lam} \left[X \right]  } \leq  \nu^{2n} 2^{2n-1}  \br{\max \left\{ c_{\phi}, c_{\psi} \right\}}^{2n} \frac{n A \br{C't}^n }{\sqrt{t}} \lam ~,
\]
where $A, C'$ are some constants. For sufficiently small $t$ we have
\[
    \sum_{n=1}^{\infty} \norm{ R_n \br{\lam} \left[ X\right]  } =  O \br{\lam} ~.
\]

Consider $n$-th term in \eqref{eq:TempSeries}
\[
    A_n \left[X \right] = \br{-1}^n \sum_{1_{2n} \leq \gamma \leq \nu 1_{2n} } \sum_{ \pi \in S' \br{2n}  }  U \br{\gamma, \pi} Q_{\gamma, \pi} \left[ X \right] ~.  
\]
Using recurrence relation  from Theorem \ref{th:FD-RecurRelation}, we get
\begin{eqnarray*}
    A_{n} \left[X \right] = \br{-1} \frac{t}{n}  \sum_{j=1}^{\nu} \sum_{j=1}^{\nu} && a_{ij} Q_i Q_j A_{n-1} \left[X \right] - a_{ij} Q_i A_{n-1} \left[X \right] Q_j  \\
    && - b_{ij} Q_j A_{n-1}  \left[X \right] Q_i +  b_{ij} A_{n-1} \left[X \right] Q_j Q_i ~.
\end{eqnarray*}
Taking into account relations \eqref{eq:FugacityCoeffs}, we obtain
\[
    A_n \left[ X \right] = -i \frac{t}{n} \comm{ H' }{ A_{n-1} \left[ X \right]  } ~.
\]
Let $\mathcal{L} \left[X \right] = i \comm{ - H' }{X} $. Then we get
\begin{equation} \label{eq:DysonSeriesFinal}
    \wt{\Tau}_{\lam} \br{t} X = \sum_{n=0}^{\infty} \frac{t^n}{n!} \Tau^S \br{\lam^{-2} t} \br{ \mathcal{L}^n \left[ X \right]  } ~.    
\end{equation}

Now, consider the Cauchy problem for the Heisenberg equation
\begin{equation} \label{eq:InitialProblemRedDyn}
\left\{ \begin{aligned}
    & \frac{ d X \br{t} }{d t} = i \big[ H^S-\lam^2 H'  , X \big] ~, \\
    & X \br{0} = X_0 ~,
\end{aligned} \right.
\end{equation}
where $X_0 \in \UU^S$.  For solving \eqref{eq:InitialProblemRedDyn} consider the following Cauchy problem
\begin{equation} \label{eq:InitialProblemRedDynReplacement}
\left\{ \begin{aligned}
    & \frac{ d C \br{t} }{d t} = i \lam^2 \big[ -  e^{- i  H^S t }  H' e^{i  H^S t}   , C \br{t} \big] ~, \\
    & C \br{0} = X_0 ~.
\end{aligned} \right.
\end{equation}
Taking into account Conditions \ref{cond:PermutForHermite} and \ref{cond:Interaction}, we obtain the solution of \eqref{eq:InitialProblemRedDynReplacement} as
\[
    C \br{t} = X_0 + \sum_{n=1}^{\infty} \frac{t^n}{n!} \br{\lam^2 \mathcal{L}}^n \left[X_0 \right] ~.
\]
The proof of converge of this series is similar to the proof of Lemma \ref{le:DS-converge}. Finally we obtain that the solution of \eqref{eq:InitialProblemRedDyn} is represented by series \eqref{eq:DysonSeriesFinal} as
\begin{equation} \label{eq:DysonTempReduced}
    X \br{t} = \Tau^S \br{t} C \br{t} = \sum_{n=0}^{\infty} \frac{t^n}{n!}  \Tau^S \br{t} \br{\lam^2 \mathcal{L}}^n \left[X_0 \right] ~.  
\end{equation}
The solution can be defined in the weak sense, as described in section \ref{sec:DS-definition}. We can see that series \eqref{eq:DysonSeriesFinal} and  \eqref{eq:DysonTempReduced} are match if we make time scaling $t \to \lam^{-2} t$ in the second series.  In other words, equation \eqref{eq:InitialProblemRedDyn} is a master equation for the reduced system dynamic in the WCL for the considered model.
\end{proof}
\begin{remark}
We have derived the reduced system dynamics only for some subset of the system observable algebra $\UU^S$. The formula for the evolution of any pure state $\psi$ with $\psi \in \DD$ in the weak coupling limit follows from the obtained result as
$$
\psi \mapsto   e^{ -i  \wt{H}_{\lam} t } \psi  ~.
$$
\end{remark}

\begin{example}
    Let us show that for the interaction Hamiltonian in Example \ref{example:CommutativeCase} the operator $\wt{H}_{\lam}$ with domain $F$ is essentially self-adjoint. This operator is a Hermitian (i.e., symmetric) operator and in momentum representation one has
\[
     (H^S - \lam^2 H') f  = \om \br{ k}  {f} \br{ k} ~,
\]
where $\om \br{ k}$ is a real-value  function of $ k \in \RR^3$. Since $\om \br{ k}$ is a real-valued function, if $ \br{ \om \br{ k} \pm i }  {f} \br{ k} = 0 $ for some $ {f} \br{ k} $ then $ {f} \br{ k} = 0$ in $L_2 \br{\RR^3}$. We obtain 
\[
    \ker  \bigl((H^S - \lam^2 H')^{\dagger} \pm i  \bigr) = \left\{ 0 \right\} ~.
\]
Then the operator $ H^S - \lam^2 H'$ defined on the domain $\DD$ is essentially self-adoint.
\end{example}

\section{Numerical comparison of the free and limiting dynamics}\label{Sec6:Simulations}
In this section, we analyze and compare the dynamics induced by the free system  and the modified limiting Hamiltonians $H^S, \widetilde{H}_{\lam}$ respectively by considering simulation of the dynamics of a Gaussian wave packet.

Explicitly, we consider as the initial system state the Gaussian wave packet
\[
    \phi \br{x}  = \pi^{ - \frac{3}{4} } \exp{ - \frac{ \abs{x}^2 }{2} } ~,
\]
the free and one-particle  Hamiltonians $h = H^S = h_2$, and the interaction Hamiltonian 
\begin{equation} \label{eq:InteractionExamples}
    V = \br{ c_1 P_1 + c_2 P_2 + c_3 P_3 } \otimes (a^{\dagger} \br{f} + a \br{f}) ~, 
\end{equation}
where $P_j = -i \partial_j$ is the vector momentum operator and $c_j \in \RR$ are some constants. Let $\phi_t (x) = e^{- i H^S t} \phi (x)$, $\wt{\phi}_t (x) = e^{ -i ( H^S -\lam^2 H') t} \phi \br{x}$ be the free and perturbed in WCL evolution, where $H'$ is the operator from Theorem \ref{th:ReducedDynamic} for interaction \eqref{eq:InteractionExamples} and $\lam$ is small parameter. We compare probability densities in this case, i.e. $\rho_{t} \br{x} = | \phi_t \br{x} |^2 $ and $\wt{\rho}_{t} \br{x} = | \wt{\phi}_t \br{x} |^2 $. For this, define the following time-dependent matrices
\begin{eqnarray*}
     M_0 \br{t}  & = & \br{1 + 2i t} \mathbb{I}_3 ~, \\
     M \br{t}  & = &  \br{1 + 2i t} \mathbb{I}_3 - 2 i \alpha \br{ c \cdot c^{\mathrm{T}  }}  \lam^2 t ~,
\end{eqnarray*}
where $c = \br{c_1, c_2, c_3}^{\mathrm{T}} $, $\mathbb{I}_3$ is the identity $3 \times 3$ matrix and $\alpha$ is defined by~\eqref{eq:ConstantsFormFactors} as 
\[
    \alpha = \int\limits_{\RR^3} \frac{ \abs{ {f} \br{ k} }^2 \big( 1 - 2 \rho  ( |k|^2 ) \big)   }{ |k|^2 } dk ~.
\]
We have
\begin{eqnarray*}
       \rho_{t} \br{x} &=& \pi^{-\frac{3}{2}} \abs{ \det M_0 \br{t} }^{-1} \abs{ \exp{ -\frac{1}{2} x^{\mathrm{T}} M_0^{-1} \br{t} x }  }^2 ~, \\
       \wt{\rho }_{t} \br{x} &=& \pi^{-\frac{3}{2}} \abs{ \det M\br{t} }^{-1} \abs{ \exp{ -\frac{1}{2} x^{\mathrm{T}} M^{-1} \br{t} x }  }^2 ~.
\end{eqnarray*}
Using Taylor expansion around $\lam=0$ we get
\begin{equation} \label{eq:InverseMatrixTayExpans}
    M^{-1} \br{t} \sim \br{1 + 2it}^{-1} \mathbb{I}_3  + 2i \alpha t  \lam^2 \br{1 + 2it}^{-2} \br{c \cdot c^{\mathrm{T}}} ~.
\end{equation}
With \eqref{eq:InverseMatrixTayExpans} we obtain
\[
    \wt{\rho }_{t} \br{x} \sim C \br{t} e^{ -G_t \br{x} } ~,
\]
where
\[
    G_t \br{x} = \frac{ \abs{x}^2 }{ 4 t^2 + 1 } + \lam^2 \frac{16 \alpha t^2}{\br{ 4 t^2+1}^2 } \br{ c^{\mathrm{T} }  x  }^2
\]
and $C \br{t}$ is a normalization factor. So, we have that $\rho_t \br{x}$ is a spherically symmetric Gaussian  with increasing dispersion and $\wt{ \rho }_t \br{x}$ is a deformed Gaussian that decreases most slowly (resp., faster) in the direction orthogonal to vector $c$ if $\alpha$ is positive (resp., negative).

For simulations set $\alpha = 4.0$ and $\lam = 0.1$.   

Fig.~\ref{fig:snapshot} shows snapshots of the wave packet dynamics under the free system Hamiltonian (upper row) and under the modified Hamiltonian (middle and bottom rows, cases $c_1= 4.0$, $c_2 = -4.0$, $c_3 = 0.0$ and $c_1 = 4.0$, $c_2 = 0.0$, $c_3 = 0.0$, respectively) at four time instants, starting time is $0.0$ and the final time is $1.5$. We chose these values of $\alpha$ and $c_j$  for simplicity of the animation, so that the differences between the free and perturbed dynamics are clearly visible.  Fig.~\ref{fig:movie} (Multimedia available online) shows the movie with continuous time free evolution and perturbed evolution for case $c_1 = 4.0, c_2 = -4.0, c_3 =0.0$ of the wave-packet. Since our results hold for dimension $d > 2$ (see Sec.~\ref{Sec5:Discussion}), we plot the probability density functions $\rho_t \br{x}, \wt{ \rho_t } \br{x}$ in projection onto the plane $x_3=0$, i.e. we plot $\rho_t(x_1,x_2,0)$ and $\wt{\rho}_t(x_1,x_2,0)$.

\begin{figure}
\centering
\includegraphics[width = \linewidth]{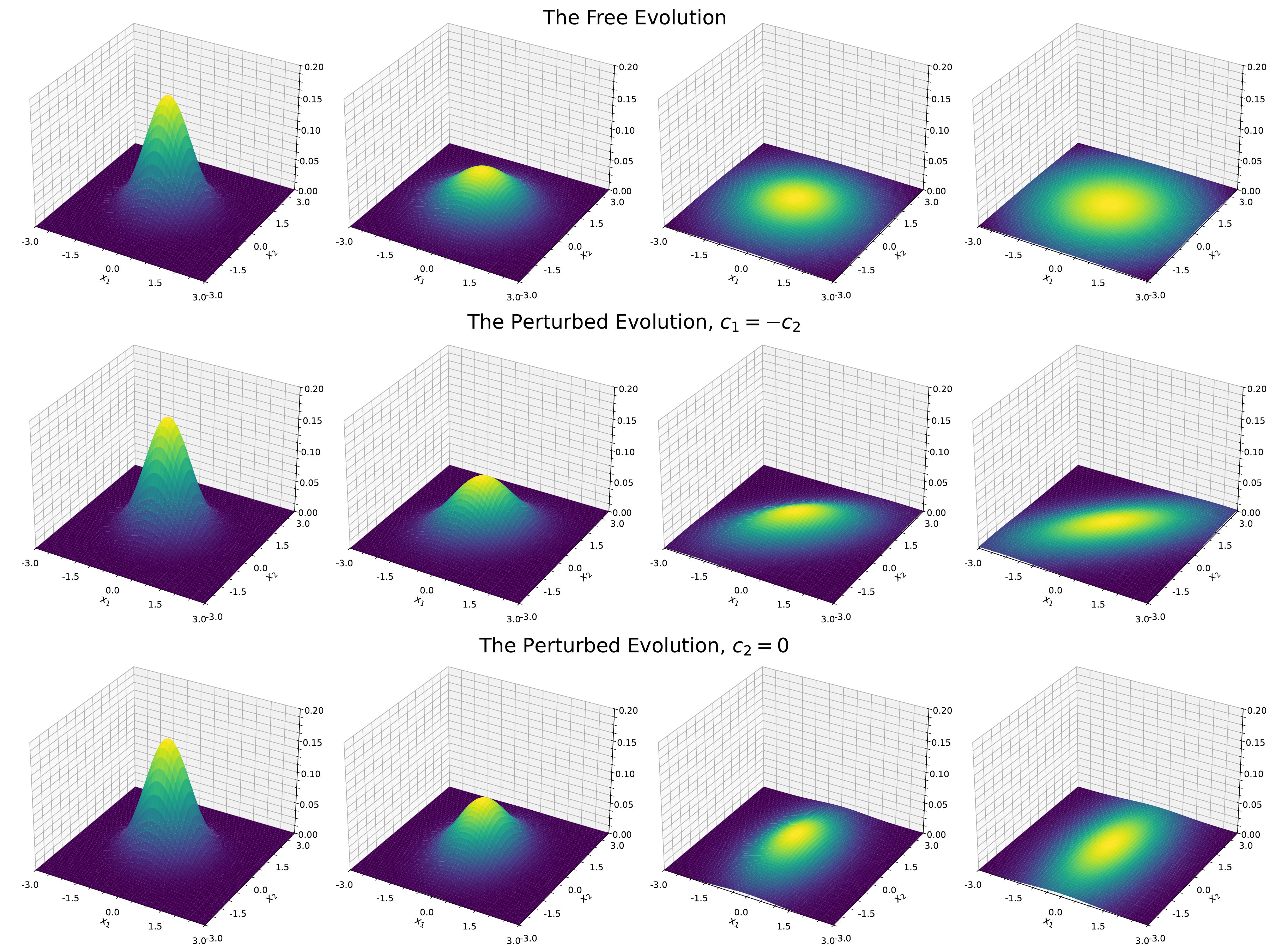}  
\caption{\label{fig:snapshot} Evolution of the Gaussian wavepacket under the free system Hamiltonian $H^{\rm S}$ (upper row) and under the modified limiting Hamiltonian with $ c_1 = - c_2 $ (middle row) and $c_2 = 0$ (bottom row) at four different time instants [0.0, 0.5, 1.0, 1.5].}
\end{figure}

\begin{figure}
\centering
\includegraphics[width = \linewidth]{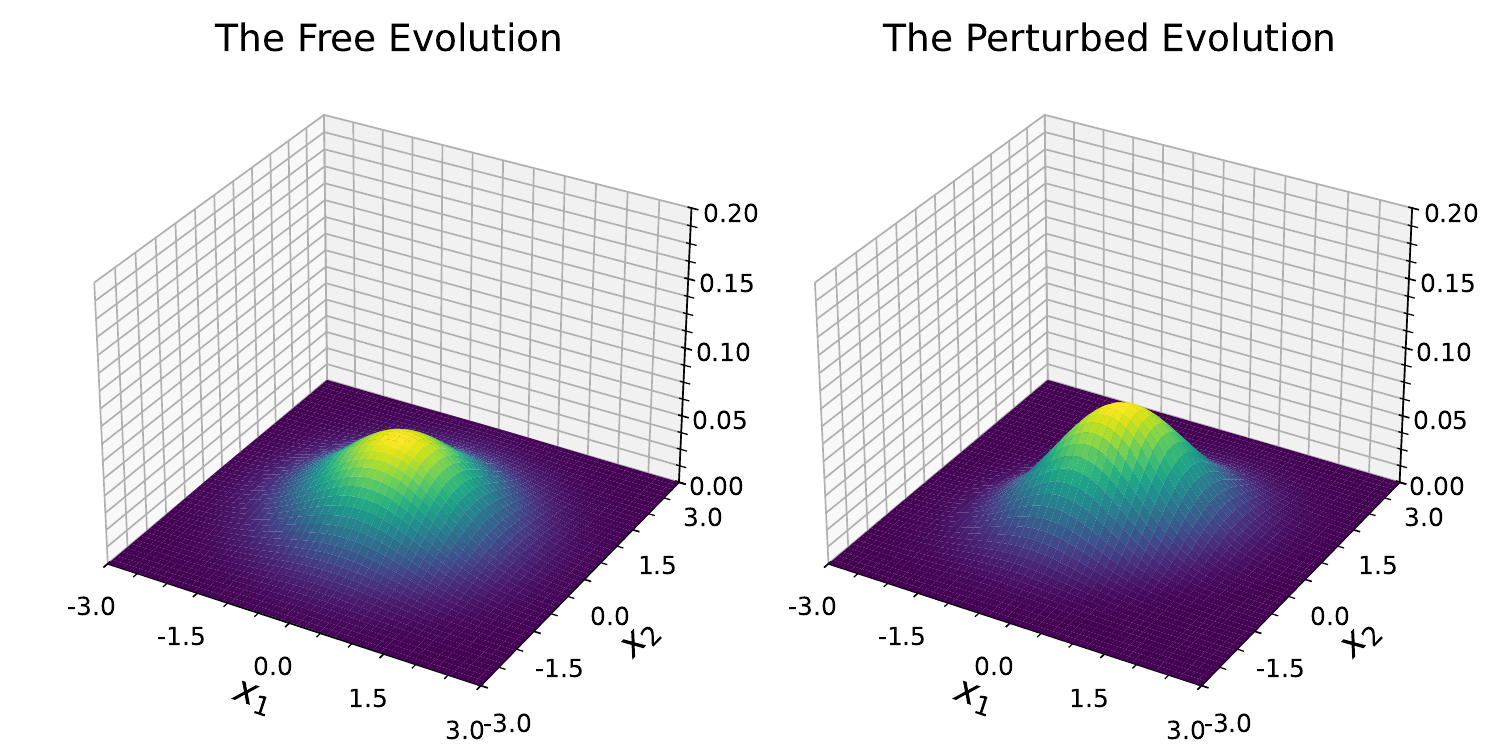}  
\caption{Evolution of the Gaussian wavepacket under free Hamiltonian $H^{\rm S}$ (left) and under modified lmiting Hamiltonian $\widetilde{H}_{\lam}$ (right) (Multimedia available online in 
the ancillary file EvolutionOfGaussianWavepacket.mp4).
\label{fig:movie}}
\end{figure}

\section{Discussion}\label{Sec5:Discussion}
In this paper we rigorously derive the reduced dynamic of a quantum system interacting with a reservoir consisting of Fermi or Bose particles. We consider an infinite-dimensional quantum system with unbounded system operators with continuous spectrum. Motivated by the model of an atom interacting with electro-magnetic field, we impose a weak commutativity condition for the free system Hamiltonian and the system part of the interaction. For such system, we  describe all nonzero in the limit diagrams and correlation functions of the reservoir, and compute their limiting values. Then we derive the limiting reduced system dynamics which has the form of a unitary evolution with some modified Hamiltonian. The modified Hamiltonian has the form of the free Hamiltonian and some additional term which can be interpreted as Lamb shift. Moreover, we have found upper estimates on the rest terms that allowed to estimate the convergence rate of the reduced system dynamics to the limiting dynamics; we prove that the sum of vanishing in the limit terms behaves as $O \br{\lam} $. 

The problem of proving self-adjointness (or essential self-adjointness) of Hamiltonians of the form  $H_0 + \lam V$ is a difficult  problem\cite{Arai::HamiltonianQuantumElectrodynamics, Jaksik::FermiGoldenRule}. In our analysis, we do not rely on self-adjointness of the full Hamiltonian $H_0 + \lam V$ or, for example, on that the domain $\DD$ is a domain of the essential self-adjointness for this operator. We request essential  self-adjointness for the operator $H^S - \lam^2 H'$ in Theorem \ref{th:ReducedDynamic} so that exponent of this operator is correctly defined when $H^S$ and $Q_j$ are unbounded operators. In other hand, Conditions \ref{cond:Formfactors} -- \ref{cond:InitState} are sufficient to define Dyson series (see Lemma \ref{le:DS-converge} ) and obtain the limiting master-equation \eqref{eq:InitialProblemRedDyn}. 

The absence of the dissipative part in the limiting dynamics can be understood from the general WCL theory. In the WCL, the reduced system dynamics is determined by quantities involving the limit as $\lambda\to 0$ of integration over $t\in\mathbb R_+$ of correlation functions with kernel $\frac{1}{\lambda^2}e^{i\omega(k)t/\lambda^2}$. In the sense of distributions, according to Sokhotski–Plemelj theorem
\[
\int\limits_0^\infty e^{\pm i\omega t}dt=\pi\delta(\omega)\pm i P.V.\frac{1}{\omega} ~.
\]
The first and the second terms in the r.h.s. of this equality determine contributions of the interaction with the reservoir to the dissipative and Hamiltonian parts of the limiting master equation for the reduced system dynamics. In our case $\omega$ is the function on $\mathbb R^3$ and $\delta(\omega)$ is a surface delta-function concentrated on the surface $\omega(k)=0$ in $\mathbb R^3$ (surface delta-functions are discussed for example in Sec.~1.7 of Ref.~\onlinecite{vladimirovbook}). In the considered cases of either $\omega(k)=|k|$ or $\omega(k)=|k|^2$, the surface $\omega(k)=0$ reduces to only one point and delta-function on this surface becomes zero (it is not the same as point delta-function centered at the origin). For example, delta-function $\delta_{S_r}$ of the sphere of radius $r$ acts on a test function $f\in\SS(\mathbb R^3)$ as $(\delta_{S_r},f)=\int\limits_S dS f(x)$ and as $r\to 0$ such integrals for any test function $f$ converge to zero. In other examples, such as for interaction of a two-level atom with a reservoir, $\omega(k)$ has the form $\omega(k)=|k|-\omega_0$ or $\omega(k)=|k|^2-\omega_0$, where for two-level atom $\omega_0$ is the non-zero transition frequency between the two levels. In this case, surface $\omega(k)=0$ is not reduced to a point and determines a non-trivial surface delta-function. In this sense, under the GRWA the single Bohr frequency $\omega_0$ contributes to the limiting dynamics. As mentioned above, absence of the dissipative part in the reduced dynamics is sometimes called as Quantum Cheshire Cat effect~\cite{ALVBook}.

Some of our estimates can be improved, such as for example estimate of the absolute value  of the Jacobian in the proof of Lemma~\ref{le:ZeroFD}. However, in the analysis of this work it can only lead to an increase of the constant $t_0$ in main Theorem \ref{th:ReducedDynamic}, but because of its implicit definition, this increase is not significant.

The results of this work also hold for the space $\RR^d$ with any dimension $d \geq 3$. For $d \leq 2$, the two-point correlation function $ c(\Delta t) = \om \big( a^{\dagger} (e^{i \Delta t h} f) a (  g)  \big) $  can be non-integrable on $(- \infty, + \infty)$ .

A more general condition for Lemma~\ref{le:fugacityIneq} form-factors $f_i, g_j$ and density $\rho (h)$ is that $\rho(k) f_i(k) g_j (k)$ belongs to $\mathfrak G$.

If we consider vector $Q = \br{Q_1, Q_2, \dots, Q_{\nu}}^{\mathrm{T}}$ where $Q_j$ is operator from \eqref{eq:Interaction}, and matrix $A_{ij} = \alpha_{ij}$, where $\alpha_{ij}$ are defined by \eqref{eq:ConstantsFormFactors}, \eqref{eq:ConstantsFormFactorsBoze} for Fermi and Bose particles, respectively, then the additional Hamiltonian $H'$ appearing in Theorem \ref{th:ReducedDynamic} can be expressed as
\[
    H' = Q^{ \mathrm{T} } A Q ~.
\]
In general matrix $A$ is not Hermitian. But if each operator in \eqref{eq:Interaction} is Hermitian, then $f_i = g_i$ for all $i = 1, \dots, \nu$ and $A$ is a Hermitian matrix.  In general $H'$ can be represented as a sum of Hermitian operators due to Condition \ref{cond:PermutForHermite}. If $i = j$ and $\sigma \br{i} = i $ then $\alpha_{ij} Q_i Q_j $ is a Hermitian operator, otherwise $\alpha_{ \sigma \br{j} \sigma \br{i} } Q_{\sigma \br{j} } Q_{\sigma \br{i}} + \alpha_{ij} Q_i Q_j$ is a Hermitian operator.

\begin{acknowledgments}
The authors are deeply grateful to Alexander E. Teretenkov for useful comments. This work was supported by Ministry of Science and Higher Education of the Russian Federation (Grant No. 075-15-2024-529).
\end{acknowledgments}

\section*{AUTHOR DECLARATIONS}
\noindent {\bf Conflict of Interest}

The author has no conflicts of interest to disclose.

\section*{DATA AVAILABILITY}

Data sharing is not applicable to this article as no new data were created or analyzed in this study.

\appendix

\section{Proof of Lemma \ref{le:EstimateIntegral}} \label{app:ProofLemma}
\begin{proof}
For the first inequality 
\begin{eqnarray*}
 \lam^{2p} \int\limits_{0}^{\lam^{-2}t} \tau^p \chi \br{\tau} d \tau &\leq&   \lam^{2p} \br{ \int\limits_0^1 \tau^p d \tau + \int\limits_{1}^{\lam^{-2}t} \tau^{p - \frac{3}{2} } d \tau  } \\ 
 &\leq&  \frac{ \lam t^{p -\frac{1}{2}} }{p - \frac{1}{2}} - \lam^{2p} \br{ \frac{1}{p-\frac{1}{2}} - \frac{1}{p+1} } \leq 2 t^{p - \frac{1}{2}} \lam.
\end{eqnarray*}
We prove the second inequality by induction on $k$. The base of induction is
\[
    G_{1, p} \br{\lam} = \lam^{2p} \int\limits_0^{\lam^{-2}t} \tau^p \chi \br{\tau} d \tau \leq 2  t^{p-\frac{1}{2}} \lam,
\]
holds for all $p \geq 1$. 

The application of the binomial expansion for $k > 1$ yields
\begin{eqnarray*}
    G_{k, p} \br{\lam} & = & \sum_{q=1}^p \binom{p}{q} \lam^{2p} \int\limits_{S_k \br{\lam^{-2}t}} dt_1 \dots dt_k t_1^{q} \br{t_2 + \dots + t_k}^{p-q} \chi \br{t_1} \dots \chi \br{t_k}  \\
    &&+ \lam^{2p} \int\limits_{S_k \br{\lam^{-2}t}} dt_1 \dots dt_k \br{t_2 + \dots + t_k}^p \chi \br{t_1} \dots \chi \br{t_k}.
\end{eqnarray*}
For $q \geq 1$ we  obtain
\begin{eqnarray*}
 && \lam^{2p} \int\limits_{S_k \br{\lam^{-2}t}} dt_1 \dots dt_k t_1^{q} \br{t_2 + \dots + t_k}^{p-q} \chi \br{t_1} \dots \chi \br{t_k}  \\
 && \leq  \lam^{2p} \int\limits_{S_k \br{\lam^{-2}t}} dt_1 \dots dt_k \br{\lam^{-2}t}^{p-q} t_1^q \chi \br{t_1} \chi \br{t_2} \dots \chi \br{t_k}  \\ 
 && \leq t^{p-q} \lam^{2q} \int\limits_{ \br{0, \lam^{-2}t}^{k} } dt_1 \dots dt_k t_1^q \chi \br{t_1} \chi \br{t_2} \dots \chi \br{t_k}  \\ 
 &&   \leq t^{p-q} \left( \int\limits_0^{+\infty} \chi \br{\tau} d \tau \right)^{k-1} \br{\lam^{2q} \int\limits_0^{\lam^{-2}t} \tau^q \chi \br{\tau} d \tau  } \leq 2^k \lam t^{p-\frac{1}{2}}.
\end{eqnarray*}
For $q=0$ we get
\begin{eqnarray*}
&& \lam^{2p} \int\limits_{S_k \br{\lam^{-2}t}} d t_1 \dots d t_k \br{t_2 + \dots + t_k}^p \chi \br{t_1} \dots \chi \br{t_k}  \\ 
&& = \lam^{2p} \int\limits_0^{\lam^{-2} t} d t_1 \chi \br{t_1} \int\limits_{ S_{k-1} \br{\lam^{-2} t - t_1}  } d t_2 \dots d t_{k} \br{ t_2 + \dots + t_k  }^p   \chi \br{t_2} \dots \chi \br{t_k} \\ 
&& \leq \br{ \int\limits_0^{+\infty} \chi \br{\tau} d \tau  } G_{k-1, p} \br{\lam} = 2 G_{k-1, p} \br{\lam}.
\end{eqnarray*}
Finally, by induction assumption
\[
    G_{k, p} \br{\lam} \leq  2^k \lam t^{p-\frac{1}{2}} \sum_{q=1}^p \binom{p}{q} + 2 G_{k-1 ,p} \br{\lam} \leq 2^{k+p} \lam t^{p - \frac{1}{2}} + 2 \br{k-1} 2^{k-1+p} t^{p -\frac{1}{2}} \lam \leq k 2^{k+p} t^{p -\frac{1}{2}} \lam.
\]
\end{proof}

\end{document}